\newcommand*{\email}[1]{\href{mailto:#1}{\tt #1}}
\theoremstyle{plain}
\newtheorem{theorem}{Theorem}[section]
\newtheorem{proposition}[theorem]{Proposition}
\theoremstyle{definition}
\newtheorem{definition}{Definition}[section]
\theoremstyle{remark}
\newtheorem{remark}{Remark}[section]
\newtheorem{example}[remark]{Example}
\newenvironment{conditions}[3][]{
		\begin{enumerate}[
			label={\upshape\bfseries(#2\arabic*)},
			ref={\upshape #2\arabic*},
			align=left, labelindent=\parindent,leftmargin=\parindent,
			#1
		]
		\let\olditem\item
		\renewcommand*\item[1][#3\arabic{enumi}]{\olditem\label[condition]{##1}}
	}{\end{enumerate}}
\crefname{condition}{Condition}{Conditions}
\tikzstyle{ssplace}=[circle,thick,draw=black,fill=black!5,minimum size=6.5mm, inner sep=0.8pt]
\tikzstyle{blackdot}=[circle, draw=black, fill=black]
\tikzstyle{blackline}=[draw=black, thick, minimum size=6mm]
\tikzstyle{blackarrow}=[->,blackline]
\tikzstyle{transplace}=[circle,thick,draw=black!15,fill=black!5,minimum size=6mm]
\tikzstyle{simularrow}=[very thick, black!70, dashed]
\tikzstyle{etaarrow}=[very thick, black!70, dashed]
\tikzstyle{snaky}=[decorate, decoration={snake, segment length=7mm, amplitude=0.3mm}]
\tikzstyle{supplace}=[circle, very thick,draw=black,fill=black!5,minimum size=10mm, inner sep=2pt]
\tikzstyle{minssplace}=[circle,thick,draw=black,fill=black!5,minimum size=5mm, inner sep=0pt]
\tikzstyle{process}=[circle, very thick,draw=black,fill=black!5,minimum size=10mm, inner sep=2pt]
\tikzstyle{labelsq}=[thick, draw=black, fill=black!5, minimum size=6mm, inner sep=2pt, rectangle,
\tikzstyle{handler}=[thick, draw=black, fill=black, minimum size=4mm, inner sep=0pt, rectangle]
\tikzset{
 position/.style args={#1:#2 from #3}{
	 at=(#3.#1), anchor=#1+180, shift=(#1:#2)
 }
}
\let\orgdescriptionlabel\descriptionlabel
\renewcommand*{\descriptionlabel}[1]{%
	\let\orglabel\label
	\let\label\@gobble
	\phantomsection
	\edef\@currentlabel{#1}%
	\let\label\orglabel
	\orgdescriptionlabel{#1}%
}
\newcommand{\Superimpose}[2]{%
	{\ooalign{$#1\@firstoftwo#2$\cr\hfil$#1\@secondoftwo#2$\hfil\cr}}}
\newcommand{\footnoteref}[1]{%
	\protected@xdef\@thefnmark{\ref{#1}}\@footnotemark%
}
\newcommand{\Attr}{\mathcal{A}}
\newcommand{\Must}{\mathcal{M}} %Guest - Must
\newcommand{\Unique}{\mathcal{U}} %Guest - Unique
\newcommand{\Injective}{\mathcal{E}} %Guest - Injective
\newcommand{\Choice}{\mathcal{C}} %Guest - Choice
\newcommand{\MustSym}{\text{$\exists$}} %Guest symbol - Must
\newcommand{\UniqueSym}{\text{$!$}} %Guest symbol - Unique
\newcommand{\InjectiveSym}{\text{\reflectbox{\rotatebox[origin=c]{180}{$!$}}}} %Guest symbol - Injective
\newcommand{\Pwset}{\mathcal{P}} %Powerset
\newcommand{\Path}{\mathbb{P}} %Path domain symbol
\newcommand{\SSD}[2]{\mathbb{S}^{#1\to#2}} %Sigma-Simulations Domain
\newcommand{\R}{\mathcal{R}} %Graph Simulation
\newcommand{\N}{\mathbb{N}} %Nat
\newcommand{\Lang}{\mathcal{L}} %Language
\newcommand{\injarrow}{\hookrightarrow} %Injective arrow
\newcommand{\BO}{\mathcal{O}}
\newcommand{\Until}{\mathbin{\mathcal{U}\kern-.1em}}
\newcommand{\out}{\mathrm{out}}
\newcommand{\fakemiddlepipe}[2]{{\kern-\nulldelimiterspace\left.\vphantom{{#1}{#2}}\right|}}
\newcommand{\my@arrow}[1]{\ooalign{$#1-\mkern-5mu-$\cr\hidewidth$#1>$}}
\newcommand{\pointerarrow}{\mathrel{\mathpalette\my@arrow\relax}}
\title{Loose Graph Simulations}
\author{
	Alessio Mansutti
	\\\small
	University of Udine
	\\\small
	\email{alessio.mansutti@lsv.fr}
	\and
	Marino Miculan\thanks{Partially supported by PRID 2017 \emph{ENCASE} of the University of Udine.}
	\\\small
	University of Udine
	\\\small
	\email{marino.miculan@uniud.it}
	\and
	Marco Peressotti\thanks{Partially supported by the Open Data Framework project at the University of Southern Denmark, and by the Independent Research Fund Denmark, Natural Sciences, grant no.~DFF-7014-00041.}
	\\\small
	University of Southern Denmark
	\\\small
	\email{peressotti@imada.sdu.dk}
}
\date{}
\begin{document}

\maketitle

\begin{abstract}
We introduce \emph{loose graph simulations} (LGS), a new notion about labelled graphs which subsumes in an intuitive and natural way \emph{subgraph isomorphism} (SGI), \emph{regular language pattern matching} (RLPM) and \emph{graph simulation} (GS).
Being a unification of all these notions, LGS allows us to express directly also problems which are ``mixed'' instances of previous ones, and hence which would not fit easily in any of them.
After the definition and some examples, we show that the problem of finding loose graph simulations is NP-complete, we provide formal translation of SGI, RLPM, and GS into LGSs, and we give the representation of a  problem which extends both SGI and RLPM. Finally, we identify a subclass of the LGS problem that is polynomial.
\end{abstract}

\section{Introduction}\label{sec:INTRODUCTION}
\emph{Graph pattern matching} is the problem of finding patterns satisfying a specific property, inside a given graph.
This problem arises naturally in many research fields: for instance, in computer science it is used in automatic system verification, network analysis and data mining
\cite{DBLP:conf/sigcomm/LischkaK09,DBLP:conf/gg/1997handbook,DBLP:conf/icdm/YanH02,Chakrabarti2006GraphML};
in computational biology it is applied to protein sequencing  \cite{DBLP:books/daglib/0002478}; in cheminformatics it is used to study molecular systems and predict their evolution \cite{DBLP:journals/bmcbi/BonniciGPSF13,bioinflmu-308}.
As a consequence, many definitions of patterns have been proposed;
for instance, these patterns can be specified by another graph, by a formal language, by a logical predicate, etc.
This situation has led to different notions of graph pattern matching, such as \emph{subgraph isomorphism} (SGI), \emph{regular language pattern matching} (RLPM) and \emph{graph simulation} (GS). Each of these notions has been studied in depth, yielding similar but different theories, algorithms and tools.

A drawback of this situation is that it is difficult to deal with matching problems which do not fit directly in any of these variants.
In fact, often we need to search for patterns that can be expressed as compositions of several graph pattern matching notions.
An example is when we have to find a pattern which has to satisfy multiple notions of graph pattern matching at once; due to the lack of proper tools, these notions can only be checked one by one with a worsening of the performances.
Another example can be found in \cite{DBLP:journals/fcsc/FanLMTW12}, where extensions of RLPM and their application in network analysis and graph databases are discussed.
A mixed problem between SGI and RLPM is presented in \cite{DBLP:journals/jacm/BarceloLR14}.

This situation would benefit from a more general notion of graph pattern matching, able to subsume naturally the more specific ones find in literature.
This general notion would be a common ground to study specific problems and their relationships, as well as to develop common techniques for them.
Moreover, a more general pattern matching notion would pave the way for more general algorithms, which would deal more efficiently with ``mixed'' problems.

To this end, in this paper we propose a new notion about labelled graphs, called \emph{loose graph simulation} (LGS, \cref{sec:LOOSE_SIMULATION}).
The semantics of its pattern queries allow us to check properties from different classical notions of pattern matching, at once and without cumbersome encodings.
LGS queries have a natural graphical representation that simplifies the understanding of their semantic; moreover, they can be composed using a sound and complete algebra (\cref{sec:GUEST_NOTATION_ALGEBRA}).
Various notions of graph pattern matching can be naturally reduced to LGSs, as we will formally prove in \cref{sec:LS_EMPTINESS_PROBLEM,sec:GraphSim_via_LS,sec:RE_via_LS}; in particular, the encoding of subgraph isomorphism allows us to prove that computing LGSs is an NP-complete problem.
Moreover, ``mixed'' matching problems can be easily represented as LGS queries; in fact, these problems can be obtained compositionally from simpler ones by means of the query algebra, as we will show in \cref{sec:REGraphs} where we solve a simplified version of the problem in \cite{DBLP:journals/jacm/BarceloLR14}. Lastly (\cref{sec:POLY}), we study a polynomial-time fragment of LGS that can still be used to compute various notions of graph pattern matching.
Final conclusions and directions for further work (such as a distributed algorithm for computing LGSs) are in \cref{sec:CONCLUSIONS}.

\section{Hosts, guests and loose graph simulations}
\label{sec:LOOSE_SIMULATION}
Loose graph simulations are a generalization of pattern matching for certain labelled graphs.
As often proposed in the literature, the structures that need to be checked for properties are called \emph{hosts}, whereas the structures that represent said properties are called \emph{guests}.

\begin{definition}
	\label{def:multigraph}
	A \emph{host graph} (herein also simply called \emph{graph}) is a triple $(\Sigma,V,E)$ consisting of a finite set of symbols $\Sigma$ (also called \emph{alphabet}), a finite set $V$ of nodes and a set $E \subseteq V \times \Sigma \times V$ of edges.
	For an edge $e = (v,l,v')$ write $s(e)$, $\sigma(e)$, and $t(e)$ for its \emph{source node} $v$, \emph{label} $l$, and \emph{target node} $v'$, respectively. For a vertex $v$ write $\mathrm{in}(v)$ and $\out(v)$ for the sets $\{ e \mid t(e) = v \}$ and $\{ e \mid s(e) = v \}$ of its incoming and outgoing edges.
\end{definition}
\begin{figure}[t!]
	\centering
	\begin{subfigure}[!h]{0.65\textwidth}
	\begin{framed}
		\begin{subfigure}[t]{0.3\textwidth}
			\centering
			\begin{tikzpicture}[transform shape]
			\node [ssplace] (p1) [] {$v$:\MustSym};
			\path [name path=arc] (p1) circle (0.8cm);
			\end{tikzpicture}
			\caption{$v \in \Must$}
		\end{subfigure}%
		~
		\begin{subfigure}[t]{0.3\textwidth}
			\centering
			\begin{tikzpicture}[transform shape]
			\node [ssplace] (p1) [] {$v$:\UniqueSym};
			\path [name path=arc] (p1) circle (0.8cm);
			\end{tikzpicture}
			\caption{$v \in \Unique$}
		\end{subfigure}
		~
		\begin{subfigure}[t]{0.3\textwidth}
			\centering
			\begin{tikzpicture}[transform shape]
			\node [ssplace] (p1) [] {$v$:\InjectiveSym};
			\path [name path=arc] (p1) circle (0.8cm);
			\end{tikzpicture}
			\caption{$v \in \Injective$}
		\end{subfigure}

		\bigskip

		\begin{subfigure}[b]{0.45\textwidth}
			\centering
			\begin{tikzpicture}[baseline=-0.5ex, transform shape]
			\node [ssplace] (p1) [] {$v$};

			\node (e) [above left of=p1] {};
			\path [name path=arc] (p1) circle (0.8cm);
			\path [name path=pa](p1) -- (e);

			\path [name intersections={of=pa and arc, by=E}];
			\node (EL) [left=0.2cm of E] {};
			\node (ER) [right=0.2cm of E] {};

			\path [blackline] (EL) edge [] (ER);
			\path [blackline] (p1) edge [bend left=30] (E);
			\end{tikzpicture}
			\caption{$\emptyset \in \Choice(v)$}
		\end{subfigure}
		~
		\begin{subfigure}[b]{0.45\textwidth}
			\centering
			\begin{tikzpicture}[baseline=2ex,node distance=2cm, transform shape]
			\node [ssplace] (p1) [] {$v$};
			\node [transplace] (f1) [above right of=p1] {};
			\node [transplace] (f2) [below right=4pt of f1] {};
			\node [transplace] (f3) [right of=p1] {};

			\path (f2) -- node[auto=false]{\ldots} (f3);

			\draw [blackarrow, name path=l1] (p1) -- node[above left=-2pt] {$e_1$} (f1);

			\draw [blackarrow, name path=l2] (p1) -- node[above=-5pt] {$e_2$} (f2);

			\draw [blackarrow, name path=l3] (p1) -- node[below right=-2pt] {$e_n$} (f3);

			\path [name path=arc] (p1) circle (0.8cm);

			\path [name intersections={of=l1 and arc,by=E}];
			\draw [blackdot] (E) circle (2pt);

			\path [name intersections={of=l2 and arc,by=F}];
			\draw [blackdot] (F) circle (2pt);

			\path [name intersections={of=l3 and arc,by=G}];
			\draw [blackdot] (G) circle (2pt);

			\draw[blackline] ([shift=(G)]0,0) arc (0:45:0.8cm);
			\end{tikzpicture}
			\caption{${\{e_1,\dots, e_n\} \in \Choice(v)}$}
		\end{subfigure}
	\end{framed}
	\end{subfigure}
	\begin{subfigure}[!h]{0.30\textwidth}
		\centering
		\begin{tikzpicture}[node distance=2cm]

		%GUEST

		\node [ssplace] (a) [] {$u$:\MustSym};
		\node [ssplace] (b) [above of=a] {$v$:\MustSym};
		\node (void) [left of=a] {};

		\path [name path=aC] (a) circle (0.55cm);
		\path [name path=bC] (b) circle (0.7cm);

		\draw [blackarrow]
		(a) edge [in=250, out=180, loop] node[below left=-3pt] {$a$} (a);
		\draw [blackarrow, name path=ab]
		(a) -- node[right=-3pt] {$b$} (b);
		\path [name path=aa]
		(a) -- (void);

		\path [name intersections={of=aa and aC,by=Aa}];
		\draw [blackdot] (Aa) circle (2pt);

		\path [name intersections={of=ab and aC,by=Ab}];
		\draw [blackdot] (Ab) circle (2pt);

		\draw[blackline] ([shift=(Ab)]0,0) arc (90:180:0.55cm);

		\node (e) [above right of=b] {};
		\path [name path=pa] (b) -- (e);

		\path [name intersections={of=pa and bC, by=E}];
		\node (EL) [left=0.2cm of E] {};
		\node (ER) [right=0.2cm of E] {};

		\path [blackline] (EL) edge [] (ER);
		\path [blackline] (b) edge [bend right=30] (E);
		\end{tikzpicture}
	\end{subfigure}

	\caption{The guest graphic notation (left) and an example (right).}\label{fig:guest_graphic_notation}
\end{figure}
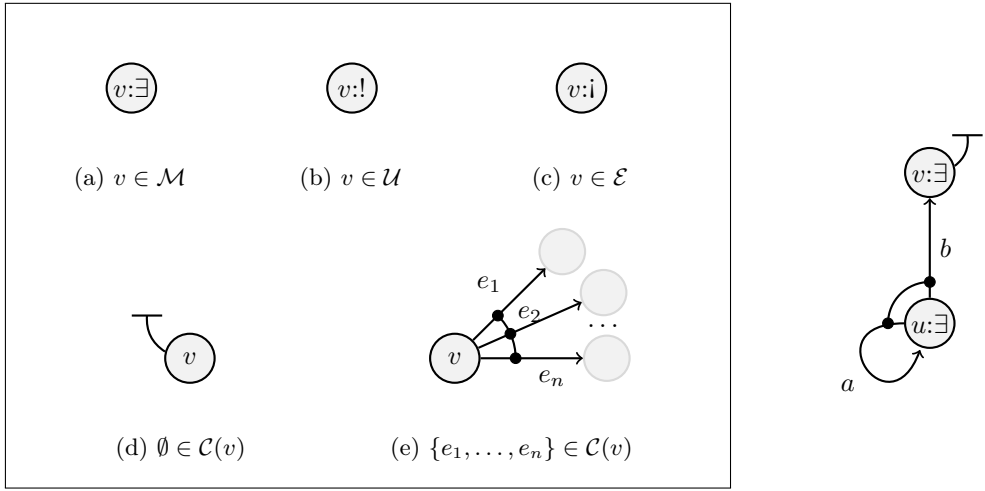
\begin{definition}
	\label{def:guest}
	A \emph{guest} $G = (\Sigma, V,E,\Must,\Unique,\Injective,\Choice)$ is a (host) graph $(\Sigma,V,E)$ additionally equipped with:
	\begin{itemize}
		\item three sets $\Must, \Unique,\Injective \subseteq V$, called respectively \emph{must}, \emph{unique} and \emph{exclusive} set.
		\item a \emph{choice} function $\Choice: V \to \Pwset(\Pwset(E))$, s.t.~$\bigcup \Choice(v) = \out(v)$ for each $v \in V$.
	\end{itemize}
\end{definition}
Roughly speaking, a guest is graph whose:
\begin{itemize}
\item nodes are decorated with usage constraints telling whether they must appear in the host, if their occurrence should be unique, and whether their occurrences can also be occurrences of other nodes or are exclusive;
\item edges are grouped into possible ``choices of sets of ongoing edges'' for any given source node to be considered by a simulation.
\end{itemize}
The semantics of the three sets $\Must$, $\Unique$, $\Injective$ and the choice function $\Choice$ will be presented formally in the definition of loose graph simulations (\cref{def:loose_simulation}).

Guests can be conveniently represented using the graphical notation shown in \cref{fig:guest_graphic_notation} (a formal algebra is discussed in \cref{sec:GUEST_NOTATION_ALGEBRA}).
A node belonging to the must, unique or exclusive set is decorated with the symbols $\exists$, $\UniqueSym$ and $\InjectiveSym$, respectively.
Choice sets are represented by arcs with dots placed on the intersection with each edge that belongs to the given choice set. The empty empty choice set ($\emptyset \in \Choice(v))$ is represented by the ``corked edge''
$(\!\!\begin{tikzpicture}[thick,scale=0.7, every node/.style={transform shape}]
\node (p1) [minimum size=0mm, inner sep=0pt] {};

\node (E) [above left=0.4cm of p1, minimum size=0mm, inner sep=0pt] {};

\node (EL) [left=0.2cm of E] {};
\node (ER) [right=0.2cm of E] {};

\path [blackline] (EL) edge [] (ER);
\path [blackline] (p1) edge [bend left=60] (E);
\end{tikzpicture})$.

\begin{example}
\cref{fig:guest_graphic_notation} shows the graphical representation of a guest with two nodes $u$ and $v$. The must set is $\{u,v\}$, the unique and exclusive sets are both empty, and the choice function takes $u$ to $\{\{(u,a,u),(u,b,v)\}\}$ and $v$ to $\{\emptyset\}$.
\end{example}
Before we formalise the notion of loose graph simulation, we need some auxiliary definitions. The following one fix the notation for paths in a graph.

\begin{definition}\label{def:pathdomain}
	For $M = (\Sigma, V,E)$, define $\Path_M$ as the set $\bigcup_{n\in\N}\{(e_0,\dots,e_n) \in E^n \mid \forall i \in \{1,\dots,n\}\ s(e_i) = t(e_{i-1})\}$ of all paths in $M$.
	Source (${s\colon \Path_M \to V}$), target ($t\colon \Path_M \to V$), and label ($\sigma\colon\Path_M \to \Sigma^+$) functions are extended accordingly:
	$s((e_0,\dots,e_n)) \triangleq s(e_0)$,
	$t((e_0,\dots,e_n)) \triangleq t(e_n)$, and
	$\sigma((e_0,\dots,e_n)) \triangleq \sigma(e_0)\dots\sigma(e_n)$.
	Lastly, for any $v,v^\prime \in V$,
	define $\Path_M(v,v^\prime)$ as the set of all paths from $v$ to $v^\prime$, formally  $\Path_M(v,v^\prime)\triangleq\{ \rho \in \Path_M \mid s(\rho) = v \land t(\rho) = v^\prime \}$.
\end{definition}

Akin to graph simulations (\cref{def:graph_simulation}), LGSs are subgraphs of the product of guest and host that are coherent with the additional data prescribing node and edge usage.

\begin{definition}\label{def:graph_product}
	Let $M_1 = (\Sigma_1, V_1, E_1)$ and $M_2 = (\Sigma_2, V_2, E_2)$ be two graphs. The \emph{tensor product graph} $M_1 \times M_2$ is the graph $(\Sigma_1 \cap \Sigma_2,\ V_1 \times V_2,\ E^\times)$ where $E^\times\triangleq \{((u,u^\prime),a,(v,v^\prime)) \mid (u,a,v) \in E_1 \land (u^\prime, a, v^\prime) \in E_2 \}$.
\end{definition}
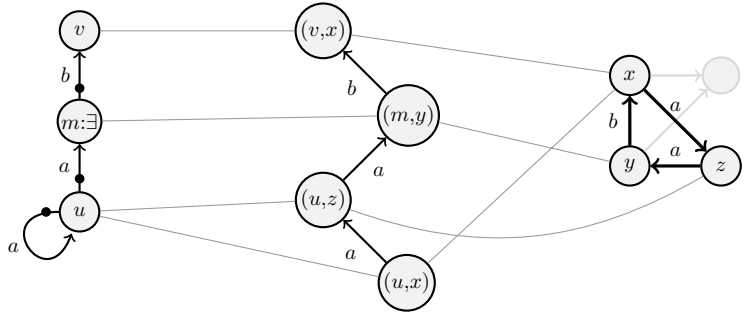
\begin{figure}[t]
	\centering
	\begin{tikzpicture}[scale=0.8, node distance=1.5cm, transform shape]
	%GUEST

	\node [ssplace, align=center] (m) [] {$m$:\MustSym};
	\node [ssplace] (f1) [below of=m] {$u$};
	\node [ssplace] (f2) [above of=m] {$v$};
	\node (hd) [left of=f1] {};

	\draw [blackarrow] (f1) edge [in=250, out=180, loop] node (uau) {} (f1);
	\node [left=-8pt of uau] {$a$};
	\path [name path=f1f1] (f1) -- (hd);

	\path [blackarrow, name path=mf2] (m) -- node (mbv) {} (f2);
	\node [left=-8pt of mbv] {$b$};

	\path [blackarrow, name path=f1m] (f1) -- node (uam) {} (m);
	\node [left=-8pt of uam] {$a$};

	\path [name path=f1C] (f1) circle (0.55cm);
	\path [name path=mC] (m) circle (0.55cm);

	\path [name intersections={of=mC and mf2,by=mD}];
	\draw [blackdot] (mD) circle (2pt);

	\path [name intersections={of=f1C and f1f1,by=f1D}];
	\draw [blackdot] (f1D) circle (2pt);

	\path [name intersections={of=f1C and f1m,by=f1D2}];
	\draw [blackdot] (f1D2) circle (2pt);

	\node (e) [above left of=f2] {};
	\path [name path=arc] (f2) circle (0.8cm);
	\path [name path=pa](f2) -- (e);

	\node [ssplace] (s1) [right=3.2cm of f2] {($v$,$x$)};
	\node [ssplace] (s2) [below right=1cm of s1] {($m$,$y$)};
	\node [ssplace] (s3) [below left=1cm of s2] {($u$,$z$)};
	\node [ssplace] (s4) [below right=1cm of s3] {($u$,$x$)};

	\path [blackarrow] (s4) -- node (nn1) {} (s3);
	\path [blackarrow] (s3) -- node (nn2) {} (s2);
	\path [blackarrow] (s2) -- node (nn3) {} (s1);

	\node [below left=-12pt of nn1] {$a$};
	\node [below right=-12pt of nn2] {$a$};
	\node [below left=-12pt of nn3] {$b$};

	%HOST

	\node (hh2) [right=3cm of s2]	{};
	\node [ssplace] (n2) [above=0.2cm of hh2]	{$x$};
	\node [transplace] (h1)  [right of=n2] {};
	\node [ssplace] (m2) [below of= n2]	{$y$};

	\node [ssplace] (n1b) [right of= m2]   {$z$};

	\path[transplace]
	(n2) edge [->] (h1)
	(m2) edge [->] (h1);

	\path[blackarrow, black, very thick]
	(n1b) edge [->] node (yaw) {} (m2)
	(m2) edge [->] node (wbz) {} (n2)
	(n2) edge [->] node (nab) {} (n1b);

	\node [left=-7pt of wbz] {$b$};
	\node [above=-8pt of yaw] {$a$};
	\node [above=-8pt of nab] {$a$};

	%SIGMASIM

	\draw[black!40]
	(n2) -- (s4)
	(m2) -- (s2)
	(n1b) edge[bend left=25] (s3)
	(n2) -- (s1)
	;

	\draw[black!40]
	(f1) -- (s4)
	(f1) -- (s3)
	(m) -- (s2)
	(f2) -- (s1)
	;

	\end{tikzpicture}
	\label{fig:sfig1}

	\caption{An LGS (center) between a guest (left) and a host (right).}
	\label{fig:loose_simulation}
\end{figure}

When clear from the context, we denote host graphs and their components as $H$ and as ${(\Sigma_H, V_H,E_H)}$ (and variations thereof).
We adopt the convention of denoting guests as $G$ (and variations thereof) and writing ${(\Sigma_G, V_G,E_G,\Must,\Unique, \Injective, \Choice)}$ for the components of the guest $G$. We are now ready to define the notion of loose graph simulation.

\begin{definition}
	\label{def:loose_simulation}
	A \emph{loose graph simulation} (LGS for short) of $G$ in $H$ is a subgraph $(\Sigma_G \cap \Sigma_H,V^{G\to H},E^{G \to H})$ of $G \times H$ subject to the following conditions:
	\begin{conditions}{LGS}{ls:c}
		\item vertices of $G$ in the must set occur in $V^{G\to H}$, \ie
		for each $u \in \Must$ there exists $u^\prime \in V_H$ such that $(u,u^\prime) \in V^{G\to H}$;
		\item vertices in the unique set are assigned to at most one vertex of $H$, \ie
		for each $u \in \Unique$ and all $u^\prime,v^\prime \in V_H$, if $(u,u^\prime) \in V^{G\to H}$ and $(u,v^\prime) \in V^{G\to H}$ then $u^\prime = v^\prime$;
		\item vertices of $H$ assigned to a vertex in the exclusive set cannot be assigned to other vertices, \ie
		for each $u \in \Injective$, $v \in V_G$ and $u^\prime \in V_H$, if $(u,u^\prime) \in V^{G\to H}$ and $(v,u^\prime) \in V^{G\to H}$ then $u = v$;
		\item for $(u,u^\prime) \in V^{G\to H}$, there is a set in $\Choice(u)$ s.t.~each of its elements is related to an edge with source $u^\prime$ and only such edges occur in $E^{G\to H}$. Formally,
		\begin{itemize}
			\let\item\olditem
			\item for each 	$(u,u^\prime) \in V^{G\to H}$ there exists $\gamma \in \Choice(u)$ such that for all $(u,a,v) \in \gamma$ it holds that $((u,u^\prime),a,(v,v^\prime)) \in E^{G\to H}$ for some $v^\prime \in V_H$;
			\item for each $((u,u^\prime),a,(v,v^\prime)) \in E^{G\to H}$ there exists $\gamma \in \Choice(u)$ s.t.~${(u,a,v) \in \gamma}$ and for each $(u,b,w) \in \gamma$ it holds that $((u,u^\prime),b,(w,w^\prime)) \in E^{G\to H}$ for some $w^{\prime} \in V_H$.
		\end{itemize}
		\item the simulation preserves the connectivity with respect to nodes marked as must:
		for each $(u,u^\prime) \in V^{G\to H}$ and $v \in \Must$ if $\Path_G(u,v) \neq \emptyset$ then there exists $v^\prime \in V_H$ such that $\Path_{(\Sigma_G \cap \Sigma_H,V^{G\to H},E^{G \to H})}((u,u^\prime),(v,v^\prime)) \neq \emptyset$.
	\end{conditions}
	The domain of all LGSs for $G$ and $H$ is denoted as $\SSD{G}{H}$.
\end{definition}

\looseness=-1
As already mentioned at the end of \cref{def:guest}, the definition of LGS attributes a semantics for the must, unique, exclusive sets and the choice function.
Regarding the \emph{unique set}, \cref{ls:c2} requires that every vertex of the guest in this set to be mapped by at most one element of the host. Similarly, \cref{ls:c3} requires the vertices of the host paired in the LGS with a node of the \emph{exclusive set} to be only paired with that node.
\Cref{ls:c4} defines the semantics of the choice function: given a pair of vertices $(u,u^\prime) \in V^{G \to H}$, it requires to select at least one set from $\Choice(u)$. The edges of these selected sets (and only these edges, as stated by the second part of the condition) must be paired in the LGS to edges in $H$ with source $u^\prime$.
This condition can be seen as a generalization of the second condition of \emph{graph simulations} (\cref{def:graph_simulation}) that requires all outgoing edges from $u$ to be in relation with outgoing edges of $u^\prime$.

\Cref{ls:c1,ls:c5} formalise the constraints attached to must nodes: the first condition imposes that every vertex in this set must appear in the LGS, while the second condition requires that, for each $(u,u^\prime) \in V^{G \to H}$, each vertex in the must set reachable in the guest from $u$ is also reachable in the LGS, with a path starting from $(u,u^\prime)$.

\begin{example}
	\Cref{fig:loose_simulation} shows a guest and its loose graph simulation over a host. In this example $\Must = \{m\}$ and $\Unique = \Injective = \emptyset$.
	Moreover, the choice function is \emph{linear}, \ie for each vertex $u$, $\Choice(u)$ contains a set $\{e\}$ for each edge in $\out(u)$ and $\emptyset$ whenever $\out(u) = \emptyset$, formally $\Choice = \lambda x. \{\{e\} \mid e \in \out(x)\} \cup \{\emptyset \mid \out(x) = \emptyset \}$.
	LGSs of this guest represents paths $(e_0,e_1,\dots,e_n)$ of arbitrary length in the host such that $\forall i < n\ \sigma(e_i) = a$ and $\sigma(e_n) = b$.
	The guest is therefore similar to the regular language $a^\star b$ and a LGS identifies paths in the host labelled with words in this language.
\end{example}

\begin{proposition}\label{prop:ls5_elim}
	Let $G$ be a guest with choice function $\Choice$ defined as $\lambda x.\{\out(x)\}$, let $H$ be a host and let $S = (\Sigma_G \cap \Sigma_H,V^{G\to H},E^{G \to H})$ be a subgraph of $G \times H$. If $S$ satisfies \cref{ls:c4} then it also satisfies \cref{ls:c5}.
\end{proposition}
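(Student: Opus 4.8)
The plan is to prove the slightly stronger statement that, under the choice function $\Choice = \lambda x.\{\out(x)\}$, condition \cref{ls:c4} forces \emph{every} guest edge leaving a matched vertex to be matched, and then to lift an arbitrary guest path edge-by-edge into $S$. In fact the hypothesis $v \in \Must$ of \cref{ls:c5} turns out to play no role: I would establish that whenever $(u,u^\prime) \in V^{G\to H}$ and $\Path_G(u,v) \neq \emptyset$ for \emph{any} $v \in V_G$, there is some $v^\prime \in V_H$ with $\Path_S((u,u^\prime),(v,v^\prime)) \neq \emptyset$, where $S = (\Sigma_G \cap \Sigma_H, V^{G\to H}, E^{G\to H})$.

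First I would record the key consequence of the assumption on $\Choice$. Since $\Choice(u) = \{\out(u)\}$ is a singleton, the only $\gamma$ available in the first bullet of \cref{ls:c4} is $\gamma = \out(u)$; hence for every $(u,u^\prime) \in V^{G\to H}$ and every guest edge $(u,a,w) \in \out(u)$ there exists $w^\prime \in V_H$ with $((u,u^\prime),a,(w,w^\prime)) \in E^{G\to H}$. Because $S$ is a graph in the sense of \cref{def:multigraph}, the endpoints of each of its edges lie in $V^{G\to H}$; in particular $(w,w^\prime) \in V^{G\to H}$, so this ``successor-lifting'' property can be applied again at $(w,w^\prime)$.

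Next I would fix $(u,u^\prime) \in V^{G\to H}$ and $v$ with $\Path_G(u,v) \neq \emptyset$, choose a path $\rho \in \Path_G(u,v)$, and lift it by induction on its length. In the base case $\rho$ is empty (so $u = v$) and $(u,u^\prime)$ itself witnesses $\Path_S((u,u^\prime),(u,u^\prime)) \neq \emptyset$. For the inductive step, suppose a prefix of $\rho$ ending at guest vertex $w$ has already been lifted to a path of $S$ from $(u,u^\prime)$ to some $(w,w^\prime) \in V^{G\to H}$; the next edge of $\rho$ has the form $(w,a,\hat w) \in \out(w)$, so the successor-lifting property yields $\hat w^\prime \in V_H$ with $((w,w^\prime),a,(\hat w,\hat w^\prime)) \in E^{G\to H}$, extending the lifted path to $(\hat w,\hat w^\prime) \in V^{G\to H}$. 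Iterating to the end of $\rho$ produces a path of $S$ from $(u,u^\prime)$ to $(v,v^\prime)$ for some $v^\prime \in V_H$, which is exactly what \cref{ls:c5} demands.

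I expect no genuine obstacle here, as the argument is a routine induction. The two points that demand care are: making explicit that the singleton choice function collapses the existential over $\gamma$ in \cref{ls:c4} into a universal statement over the outgoing guest edges; and the bookkeeping ensuring that each intermediate pair $(w,w^\prime)$ remains in $V^{G\to H}$, so that \cref{ls:c4} stays applicable all along $\rho$ — this is precisely where the subgraph property of $S$ is used. Note also that the second bullet of \cref{ls:c4} is not needed for this direction.
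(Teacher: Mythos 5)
Your proposal is correct and follows essentially the same route as the paper's proof: observe that the singleton choice function $\Choice(u)=\{\out(u)\}$ collapses the existential over $\gamma$ in \cref{ls:c4} into the statement that every outgoing guest edge at a matched pair is matched, and then propagate this along a guest path to build the required path in $S$. The paper phrases the iteration loosely as a coinductive argument while you make the induction on path length (and the fact that intermediate pairs stay in $V^{G\to H}$ because $S$ is a graph) explicit, which is a presentational difference only.
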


\begin{proof}
	Let $\Choice(v) = \{\out(v)\}$ for all $v \in V_G$. If $(u,u^\prime) \in V^{G\to H}$ then \ref{ls:c4} requires that for all $(u,a,v) \in \out(u)$ there exists $v^\prime$ such that $(v,v^\prime) \in V^{G \to H}$ and $((u,u^\prime),a,(v,v^\prime)) \in E^{G \to H}$.
	Coinductively, since the same will hold for every of those pair $(v,v^\prime)$, it follows that whenever there is a path in $G$ from $u$ to a node $m \in \Must$ in the must set, then there must be a path in $S$ from $(u,u^\prime)$ to a pair of vertices $(m,w)$, where $w \in V_H$. Hence, \ref{ls:c5} holds.
\end{proof}

\section{An algebra for guests}
\label{sec:GUEST_NOTATION_ALGEBRA}
Guests are used to specify the patterns to look for inside a host; hence they should be easy to construct and to understand. To this end, besides the graphical notation described in \cref{sec:LOOSE_SIMULATION}, in this section we introduce an algebra for guests which allows us to construct them in a compositional way.

\begin{definition} 
	A guest is \emph{empty} whenever it has no vertexes.
	A guest with only one vertex and no edges is a \emph{unary guest} and is denoted as
	\[
		p_\Attr \triangleq (\emptyset, \{p\}, \emptyset, \{p \mid \MustSym \in \Attr\}, \{p \mid \UniqueSym \in \Attr\}, \{p \mid \InjectiveSym \in \Attr\}, \{p \to \{\emptyset \mid \varnothing \in \Attr\}\})
	\]
	where $p$ is the only vertex and $\Attr \subseteq \{\MustSym,\UniqueSym,\InjectiveSym,\varnothing\}$ state if $p$ is respectively in $\Must$, $\Unique$, $\Injective$ or if $\emptyset \in \Choice(p)$.
	For $\alpha$ a name, $P$ and $Q$ unary guests,
	the \emph{arrow operator} from $P$ to $Q$  $\alpha$ is defined as
	\begin{gather*}
	P \xrightarrow{\alpha} Q \triangleq (\{\alpha\}, \{p,q\}, \{(p,\alpha,q)\}, \Must_P \cup \Must_Q, \Unique_P \cup \Unique_Q, \Injective_P \cup \Injective_Q, \Choice^\rightarrow)\\
	\Choice^\rightarrow \triangleq \lambda x.\begin{cases}
	c_P \cup \{\{(p,\alpha,q)\}\} \cup c_Q &\text{if } p = q \land x = p\\
	c_P \cup \{\{(p,\alpha,q)\}\} &\text{if } p \neq q \land x = p\\
	c_Q &\text{if } p \neq q \land x = q
	\end{cases}
	\end{gather*}
	A guest is called \emph{elementary} whenever it is empty, unary, or the result of the arrow operator.
\end{definition}

For example, a node $p$ with only a self loop labelled $\alpha$ can be expressed with the term $p \xrightarrow{\alpha} p$.
Besides the elementary guests, the algebra is completed by introducing two binary operators used to combine guests.

\begin{definition} 
	Let $G_1$
	and $G_2$
	be two guests.
	Their \emph{addition} is the guest:
	\begin{equation*}
	G_1 \oplus G_2 \triangleq (\Sigma_1 \cup \Sigma_2, V_1 \cup V_2, E_1 \cup E_2, \Must_1 \cup \Must_2, \Unique_1 \cup \Unique_2, \Injective_1 \cup \Injective_2, \Choice^{\oplus})
	\end{equation*}
	where the choice function $\Choice^\oplus$ is defined as
	\begin{equation*}
	\Choice^\oplus \triangleq \lambda x.\begin{cases}
	\Choice_1(x)\cup\Choice_2(x) &\text{if } x \in V_1 \land x \in V_2\\
	\Choice_1(x) &\text{if } x \in V_1\\
	\Choice_2(x) &\text{if } x \in V_2
	\end{cases}
	\end{equation*}
	The \emph{multiplication} of $G_1$ and $G_2$ is the guest:
	\begin{equation*}
		G_1 \otimes G_2 \triangleq (\Sigma_1 \cup \Sigma_2, V_1 \cup V_2, E_1 \cup E_2, \Must_1 \cup \Must_2, \Unique_1 \cup \Unique_2, \Injective_1 \cup \Injective_2, \Choice^{\otimes})
	\end{equation*}
	where the choice function $\Choice^\otimes$ is defined as follows
	\begin{equation*}
	\Choice^\otimes \triangleq \lambda x.\begin{cases}
	\{\gamma_1\cup \gamma_2 \mid \gamma_1 \in \Choice_1(x) \land \gamma_2 \in \Choice_2(x)\} &\text{if } x \in V_1 \land x \in V_2\\
	\Choice_1(x) &\text{if } x \in V_1\\
	\Choice_2(x) &\text{if } x \in V_2
	\end{cases}
	\end{equation*}
\end{definition}
Notice how addition and multiplication operators differ only by the definition of the choice function for vertices of both $G_1$ and $G_2$.
In the case of addition, the resulting choice function is the union of the two choice function $\Choice_1$ and $\Choice_2$, whereas for the multiplication, given a vertex $v \in V_1 \cap V_2$, every set of $\Choice^\otimes(v)$ is the union of a set in $\Choice_1(v)$ and one in $\Choice_2(v)$.
\begin{proposition}
	The operations $\oplus$ and $\otimes$ form an idempotent commutative semiring structure over the set of all guests.
\end{proposition}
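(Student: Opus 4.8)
The plan is to verify the idempotent-commutative-semiring axioms directly from the definitions, exploiting that a guest is a seven-component tuple and that equality of guests is componentwise equality. First I would dispatch the six ``structural'' coordinates: the alphabet $\Sigma$, the vertex set $V$, the edge set $E$, and the three sets $\Must,\Unique,\Injective$ are all combined by plain set union in \emph{both} $\oplus$ and $\otimes$. On these coordinates the two operations literally coincide with $\cup$, which is commutative, associative, idempotent, admits $\emptyset$ as neutral element, and even distributes over itself because $A \cup (B \cup C) = (A \cup B) \cup (A \cup C)$ by idempotency. Hence every axiom, once projected onto these six coordinates, holds for free, and the empty guest $\mathbf 0$ (all of whose structural coordinates are empty) is neutral there for both operations.

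All remaining content sits in the seventh coordinate, the choice function. Writing $C_i$ for $\Choice_i$, addition acts as $C_1 \cup C_2$ and multiplication as the \emph{union-product} $C_1 \odot C_2 \triangleq \{\gamma_1 \cup \gamma_2 \mid \gamma_1 \in C_1 \land \gamma_2 \in C_2\}$, in each case only at vertices shared by both operands and otherwise copying the choice function of the unique operand containing the vertex. I would first check that $\mathbf 0$ is neutral here as well: since it shares no vertex with $G$, at every vertex the single-operand clause fires and returns $\Choice_G$ unchanged, giving $G \oplus \mathbf 0 = G = G \otimes \mathbf 0$, with uniqueness of the identity following from the standard monoid argument. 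Commutativity and idempotency of $\oplus$ are inherited from $\cup$, and commutativity of $\otimes$ holds because $\cup$ is commutative, so $C_1 \odot C_2 = C_2 \odot C_1$. Associativity of each operation I would obtain by splitting on how many of the operand vertex-sets contain a given vertex, each branch reducing to associativity of $\cup$ (for $\oplus$) or of $\odot$ (for $\otimes$), the latter itself following from associativity and commutativity of $\cup$.

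I expect the real obstacle to be distributivity of $\otimes$ over $\oplus$, that is $G_1 \otimes (G_2 \oplus G_3) = (G_1 \otimes G_2) \oplus (G_1 \otimes G_3)$, which I would prove coordinate-wise on the choice function by analysing the membership of each vertex $x$ in $V_1,V_2,V_3$. The clean case is $x \in V_1 \cap V_2 \cap V_3$, where both sides reduce to the set-theoretic identity $C_1 \odot (C_2 \cup C_3) = (C_1 \odot C_2) \cup (C_1 \odot C_3)$, which holds because forming the union-product with $C_1$ commutes with the union over $C_2$ and $C_3$. The delicate part is the mixed patterns where $x$ lies in some but not all of the $V_i$: there the two sides are assembled from \emph{different} clauses of the definitions of $\oplus$ and $\otimes$, so the argument degenerates into bookkeeping that must confirm the ``present in one operand'' and ``present in both operands'' branches still deliver identical choice sets on each side. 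I would organise this by tabulating the seven nonempty membership patterns for $x$ and simplifying both sides in each, reserving the greatest care for the patterns in which $x$ is shared between $G_1$ and exactly one of $G_2,G_3$, since these expose the sharpest asymmetry between the plain union of $\oplus$ and the union-product of $\otimes$ and are where I would expect the verification to be most at risk.
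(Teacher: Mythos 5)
The paper states this proposition without proof, so there is no official argument to compare against; judged on its own terms, your plan is fine on the six set-valued coordinates and for the commutativity, associativity, idempotency and unit checks, but the step you explicitly defer --- distributivity on the choice function in the mixed-membership cases --- is not mere bookkeeping: it fails. Take $G_1$ with vertices $\{x,y\}$, single edge $e=(x,a,y)$ and $\Choice_1(x)=\{\{e\}\}$; take $G_2$ with vertices $\{x,z\}$, single edge $f=(x,b,z)$ and $\Choice_2(x)=\{\{f\}\}$; and let $G_3$ be a unary guest on a fresh vertex $w$. At $x$, the left-hand side $G_1\otimes(G_2\oplus G_3)$ uses the two-operand clause of $\otimes$ (since $x\in V_1$ and $x\in V_2\cup V_3$) and returns $\{\{e,f\}\}$, whereas on the right-hand side $G_1\otimes G_3$ falls into the one-operand clause and returns $\{\{e\}\}$, so $(G_1\otimes G_2)\oplus(G_1\otimes G_3)$ returns $\{\{e,f\}\}\cup\{\{e\}\}=\{\{e,f\},\{e\}\}$ at $x$. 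Both sides are legitimate guests (each satisfies $\bigcup\Choice(x)=\out(x)=\{e,f\}$), and they differ both syntactically and semantically (the right-hand side permits selecting $\{e\}$ alone at $x$). So the seven-case tabulation you propose cannot close the argument as written: one of the cases you flag as ``most at risk'' actually refutes the identity.

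The identity you do rely on, namely that the union-product $C_1\odot C_2$ distributes over $\cup$, is correct, and it settles the case where the vertex lies in all three operands; the obstruction is that $\oplus$ and $\otimes$ extend their pointwise operations to absent vertices with incompatible defaults (the neutral element for $\cup$ on $\Pwset(\Pwset(E))$ is $\emptyset$, while the neutral element for $\odot$ is $\{\emptyset\}$, yet both one-operand clauses simply copy the surviving choice set). A related symptom already visible in your write-up: you show the empty guest is a unit for \emph{both} $\oplus$ and $\otimes$, whereas in a semiring the additive unit must be multiplicatively absorbing, which it then cannot be. To make the proposition provable you would need to restrict attention to guests over a common vertex set (where only your ``clean case'' occurs), or weaken equality of guests to a semantic equivalence, or read ``semiring'' as not requiring distributivity and an absorbing zero; as a proof of the statement as literally given, the proposal has a genuine gap at precisely the point it postpones.
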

The algebra offers a clean and modular representation of guests. Modularity, in particular, allows us to combine queries as illustrated in the second part of this work. Furthermore, guests admit normal forms.

\begin{definition} 
	A term $G$ in the algebra of guests is in \emph{normal form} if $G = \bigoplus_{i \in I}\bigotimes_{j \in J_i} G_{i,j}$ where each $G_{i,j}$ is an elementary guest.
\end{definition}

\begin{example}
	Consider the guest \[
		(\{a,b\},\{p,q\},\{(p,a,p),(p,b,q)\},\{p,q\},\emptyset,\emptyset,\{p \mapsto \{\{(p,a,p),(p,b,q)\}\}, q \mapsto \{\emptyset\} \})
	\]
	shown in \cref{fig:guest_graphic_notation} on the right. This guest is represented by the term $q_{\{\exists,\varnothing\}} \oplus ( p_{\{\exists\}} \xrightarrow{a} p \otimes p \xrightarrow{b} q )$ which is in normal form.
\end{example}

%Every guest admits a normal form.
\begin{proposition}
	For $G = (\Sigma,V,E,\Must,\Unique,\Injective,\Choice)$ a guest, its normal form is:
	\begin{equation*}
		\bigoplus_{v \in V} v_{\{\exists \mid v \in \Must\} \cup \{\UniqueSym \mid v \in \Unique \} \cup \{ \InjectiveSym \mid v \in \Injective \} \cup \{ \varnothing \mid \emptyset \in \Choice(v) \}} \oplus  \bigoplus_{\substack{v \in V\\ \gamma \in \Choice(v)}}\left(\bigotimes_{e \in \gamma} \left( s(e) \xrightarrow{\sigma(e)} t(e)\right)\right)
	\end{equation*}
\end{proposition}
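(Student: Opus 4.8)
The plan is to prove the two things the statement asserts: that the displayed term already has the \emph{shape} required of a normal form, and that evaluating it in the algebra yields exactly $G$. The shape is immediate. The first summand is an $\oplus$-sum of unary guests $v_\Attr$, and the second is an $\oplus$-sum of $\otimes$-products of arrow guests $s(e)\xrightarrow{\sigma(e)}t(e)$; since unary and arrow guests are elementary, the whole expression matches $\bigoplus_{i}\bigotimes_{j}G_{i,j}$ with every $G_{i,j}$ elementary (reading each bare unary summand as a one-factor product, and each product with $\gamma=\emptyset$ as the empty guest). Writing $G'$ for the right-hand term, the substance is therefore the semantic identity $G'=G$. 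Because $\oplus$ and $\otimes$ form an idempotent commutative semiring over guests, the iterated operations are well defined independently of association and order, so I can evaluate $G'$ field by field.

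For the alphabet, vertex set, edge set, and the three decoration sets $\Must,\Unique,\Injective$, the verification is routine, since both $\oplus$ and $\otimes$ take componentwise unions on these fields. Each unary summand $v_\Attr$ contributes the vertex $v$ together with the correct membership flags encoded by $\Attr$, and each arrow $s(e)\xrightarrow{\sigma(e)}t(e)$ contributes its endpoints, the symbol $\sigma(e)$, and the edge $e$, while adding nothing to the decoration sets. Summing, the vertex field becomes $V$ (every $v\in V$ is supplied by its unary summand, so even isolated vertices are covered), the edge field becomes $\bigcup_{v}\bigcup_{\gamma\in\Choice(v)}\gamma=\bigcup_{v}\out(v)=E$ using $\bigcup\Choice(v)=\out(v)$ from \cref{def:guest}, the alphabet becomes $\{\sigma(e)\mid e\in E\}$ (which coincides with $\Sigma$ under the convention that every symbol labels some edge), and the three decoration fields become exactly $\Must,\Unique,\Injective$.

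The real work, and the step I expect to be the main obstacle, is the choice function. First I would compute, for a fixed $v$ and a fixed $\gamma\in\Choice(v)$ with $\gamma\neq\emptyset$, the choice function of the multiplicative term $T_{v,\gamma}=\bigotimes_{e\in\gamma}\bigl(s(e)\xrightarrow{\sigma(e)}t(e)\bigr)$. Since every $e\in\gamma\subseteq\out(v)$ has source $v$, each arrow factor has choice $\{\{e\}\}$ at $v$ and empty choice at its target. Building the product $A_1\otimes\cdots\otimes A_k$ and proceeding by induction on the partial products, the shared-vertex clause of $\otimes$ repeatedly fuses the singletons at $v$ into the single set $\{e_1,\dots,e_k\}$, giving $\Choice_{T_{v,\gamma}}(v)=\{\gamma\}$; at any target $w\neq v$ the choice stays empty, because an empty choice set is absorbed by the product clause $\{\gamma_1\cup\gamma_2\}$ (there is no $\gamma_1$ to pick) and is preserved by the single-membership clause. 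Care is needed with self-loops $s(e)=t(e)$ and with distinct edges of $\gamma$ sharing a target, which is precisely why I phrase the induction over partial products rather than read the answer off a single factor.

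Finally I would assemble the full $\oplus$-sum. Because $\oplus$ unions choice functions on shared vertices, $\Choice_{G'}(v)$ is the union, over all summands containing $v$, of their choice sets at $v$. The summands $T_{w,\gamma'}$ with $w\neq v$ meet $v$ only as a target and contribute nothing; each product $T_{v,\gamma}$ with $\gamma\neq\emptyset$ contributes $\{\gamma\}$; each empty product ($\gamma=\emptyset$) collapses to the empty guest and contributes nothing; and the unary summand $v_\Attr$ contributes $\{\emptyset\}$ exactly when $\emptyset\in\Choice(v)$. Their union is precisely $\Choice(v)$, so $\Choice_{G'}=\Choice$ and hence $G'=G$. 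The one point I would flag explicitly is this empty-choice bookkeeping: the element $\emptyset\in\Choice(v)$ is supplied solely by the unary summand through the flag $\varnothing\in\Attr$, while the matching product $T_{v,\emptyset}$ vanishes; confirming that these two mechanisms neither interfere nor introduce a spurious empty choice is the last thing to check.
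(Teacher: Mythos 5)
The paper states this proposition without proof, so there is nothing to compare against; your argument supplies exactly the field-by-field verification the authors evidently intended, and it is correct — including the two genuinely delicate points, namely that $\Choice_{T_{v,\gamma}}(v)=\{\gamma\}$ is obtained by fusing the singleton choices of the arrow factors under $\otimes$, and that the element $\emptyset\in\Choice(v)$ is recovered solely from the $\varnothing$ flag on the unary summand while the empty product degenerates to the empty guest. Your caveat about the alphabet is also well taken: the reconstructed guest has alphabet $\{\sigma(e)\mid e\in E\}$, so the stated identity holds only up to symbols of $\Sigma$ that label no edge, a small imprecision in the proposition itself rather than in your proof.
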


For $G = (\Sigma, V, E, \Must, \Unique, \Injective, \Choice)$ a guest, we write
$G[p/q]$ for the guest obtained renaming $p \in V$ as $q \not\in V$. In particular, the set of edges and choice function are:
\begin{align*}
E[p/q] &= \left\{(u,a,v)\,\middle|\,\begin{array}{l}
	(u^\prime,a,v^\prime) \in E\text{, }
	(u^\prime \neq p \implies u = u^\prime)\text{, } (u^\prime = p \implies u = q)\text{,}\\
	(v^\prime \neq p \implies v = v^\prime) \text{, and } (v^\prime = p \implies v = q)
	\end{array}\right\}\\
\Choice[p/q] &= \lambda x.\begin{cases}
\{S[p/q] \mid S \in \Choice(x)\} &\text{if } x \neq p \land x \neq q\\
\{S[p/q] \mid S \in \Choice(p)\} &\text{if } x = q\\
\end{cases}
\end{align*}

\section{The LGS problem is NP-complete}\label{sec:LS_EMPTINESS_PROBLEM}

In this section we analyse the complexity of computing LGSs by studying their emptiness problem.
Without loss of generality, we restrict to guests and hosts with the same $\Sigma$. In the following, let $G = (\Sigma_G, V_G,E_G,\Must,\Unique, \Injective, \Choice)$ and $H = (\Sigma_H, V_H,E_H)$ be a guest and a host respectively.

\begin{definition} 
	\label{def:ls_emptiness_problem}
	The \emph{emptiness problem for LGSs} for $G$ and $H$ consists in checking $\SSD{G}{H} = \emptyset$.
\end{definition}

\begin{proposition}\label{theo:LS_check_in_P}
	Computing LGSs, as well as their emptiness problem, is in NP.
\end{proposition}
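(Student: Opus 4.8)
The plan is to exhibit a polynomially-bounded certificate together with a polynomial-time verification procedure, so that non-emptiness of $\SSD{G}{H}$ is witnessed by guessing a candidate LGS and checking it. The certificate is simply a candidate subgraph, that is, a pair of sets $V^{G\to H}\subseteq V_G\times V_H$ and $E^{G\to H}\subseteq E^\times$. By \cref{def:graph_product} the product $G\times H$ has at most $|V_G|\cdot|V_H|$ vertices and at most $|E_G|\cdot|E_H|$ edges, so any subgraph of $G\times H$ — in particular any LGS — has size polynomial in $|G|+|H|$, and a guessed subgraph is a legitimate NP-certificate. It then remains to show that verifying whether such a subgraph is well-formed (a genuine subgraph of $G\times H$) and satisfies \cref{ls:c1,ls:c2,ls:c3,ls:c4,ls:c5} can be done in polynomial time.

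First I would dispatch the conditions that reduce to bounded set manipulation. Checking that $E^{G\to H}\subseteq E^\times$ with both endpoints in $V^{G\to H}$, and checking \cref{ls:c1,ls:c2,ls:c3}, each amounts to a constant number of nested scans over $V^{G\to H}$, $\Must$, $\Unique$, $\Injective$ and $V_H$, and is therefore clearly polynomial. For \cref{ls:c4}, the crucial observation is that the choice function $\Choice$ is part of the input, so each collection $\Choice(u)$ is listed explicitly and its total size contributes to $|G|$; consequently, iterating over every $\gamma\in\Choice(u)$ and over the edges it contains costs only polynomial time. Both halves of \cref{ls:c4} then reduce, for each relevant pair or edge, to searching the explicitly listed choice sets and testing membership of the required product edges in $E^{G\to H}$ — again polynomial.

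The only step I expect to pose a genuine obstacle is \cref{ls:c5}, since it is phrased via existence of paths, of which there may be exponentially (or infinitely) many. The resolution is to replace path existence by graph reachability, which is polynomial-time decidable. Concretely, I would first compute by a single traversal the reachability relation of $G$, so that for each guest vertex $u$ I know which must nodes $v\in\Must$ satisfy $\Path_G(u,v)\neq\emptyset$; I would likewise compute reachability in the candidate subgraph $(\Sigma_G\cap\Sigma_H,V^{G\to H},E^{G\to H})$. \Cref{ls:c5} then becomes a polynomial number of lookups: for every $(u,u')\in V^{G\to H}$ and every must node $v$ reachable from $u$ in $G$, test that some pair $(v,v')$ is reachable from $(u,u')$ in the subgraph. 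Each such traversal runs in time polynomial in the (polynomially bounded) size of the product graph.

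Putting the pieces together, deciding whether a given subgraph of $G\times H$ is an LGS lies in $\mathrm{P}$; since any LGS is itself a polynomial-size object, guessing one and verifying it shows that deciding $\SSD{G}{H}\neq\emptyset$ is in NP, which is the claim.
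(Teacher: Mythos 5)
Your proposal is correct and follows essentially the same route as the paper: guess a polynomial-size subgraph of $G\times H$ as certificate, check \cref{ls:c1,ls:c2,ls:c3,ls:c4} by direct scans over the explicitly listed data, and reduce \cref{ls:c5} to computing the reachability relations of $G$ and of the candidate subgraph. Your version is slightly more explicit about the certificate being polynomially bounded and about $\Choice$ being part of the input, but these are presentational differences, not a different argument.
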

\begin{proof}
	Let ${S = (\Sigma, V^{G\to H}, E^{G \to H})}$ be a subgraph of $G\times H$. We will now prove that there exists a polynomial algorithm w.r.t.~the size of $G$ and $H$ that checks whether $S$ satisfies all the conditions of \cref{def:loose_simulation}.
	The satisfiability checking of \cref{ls:c1} is in $\BO(\Must \times V^{G\to H})$ since it is sufficient for every vertex in the must set $\Must$ to check whether there is a vertex of the host paired with it.
	For similar reasons, \cref{ls:c2,ls:c3} can also be checked in polynomial time.
	Moreover, to check \cref{ls:c4} it is sufficient to check, for each $(u,v) \in V^{G \to H}$, whether there is $\gamma \in \Choice(v)$ s.t.~$\gamma \subseteq \pi_1 \circ \out((u,v))$ and if for all $u^\prime \in \pi_1 \circ \out((u,v))$ there exists
	$\gamma \in \Choice(v)$ s.t.~$u^\prime \in \gamma \subseteq \pi_1 \circ \out((u,v))$.
	This can be done by a naive algorithm in $\BO(V_H \times E_G \times (V_G \times E_H + \Choice \times E_G^2))$.
	Lastly, checking whether $S$ satisfies \cref{ls:c5} requires the evaluation of the reachability relation of $G$ and $S$ and therefore can be computed in $\BO(V_G^3 \times V_H^3)$ using the Floyd-Warshall Algorithm \cite{DBLP:journals/cacm/Floyd62a}.
	Since every condition can be checked in polynomial time we can conclude that the LGS problem is in NP.
\end{proof}

\subsection{NP-hardness: subgraph isomorphisms via LGSs}
\label{sec:SubIso_via_LS}

We will now show the NP-hardness of the emptiness problem for LGSs by reducing the emptiness problem for subgraph isomorphism to it.
The subgraph isomorphism problem requires to check whether a subgraph of a graph (host) and isomorphic to a second graph (query) exists.
Application of this problem can be found in network analysis \cite{DBLP:conf/sigcomm/LischkaK09}, bioinformatics and chemoinformatics \cite{DBLP:journals/bmcbi/BonniciGPSF13,bioinflmu-308}.

\begin{definition} 
	\label{def:subgraph_isomorphism}
	Let $H = (\Sigma,V_H,E_H)$ and $Q = (\Sigma,V_Q,E_Q)$ be two graphs called \emph{host} and \emph{query} respectively.
	There exists a subgraph of $H$ isomorphic to $Q$ whenever there exists a pair of injections $\phi : V_Q \injarrow V_H$ and $\eta: E_Q \injarrow E_H$ s.t.~	$\sigma(e) = \sigma \circ \eta(e)$,
	$\phi \circ s(e) = s \circ \eta(e)$, and
	$\phi \circ t(e) = t \circ \eta(e)$ for each $e \in E_Q$.
\end{definition}
The subgraph isomorphism problem, as well as the emptiness problem associated to it, is shown to be NP-complete by Cook \cite{DBLP:conf/stoc/Cook71}.
Its complexity and its importance makes it one of the most studied problem and multiple algorithmic solutions where derived for it \cite{DBLP:journals/bmcbi/BonniciGPSF13,DBLP:journals/jacm/Ullmann76,DBLP:journals/pami/CordellaFSV04}.
We will now show that the emptiness problem for subgraph isomorphism can be solved using LGSs.

\begin{proposition}\label{prop:SubIso_via_LS}
	Let $H = (\Sigma,V_H,E_H)$ and $Q = (\Sigma,V_Q,E_Q)$ be a host and a query for subgraph isomorphism respectively. Moreover, let
	\begin{equation*}
	G = \bigoplus_{v \in V_Q} v_{\{\exists\UniqueSym\InjectiveSym\}\cup\{\varnothing \mid \out(v) = \emptyset \}} \oplus \left(\bigotimes_{e \in E_Q} \left(s(e) \xrightarrow{\sigma(e)} t(e)\right)\right)
	\end{equation*}
	Then, there exists a subgraph of $H$ isomorphic to $Q$ iff there is a LGS of $G$ in $H$, \ie $\SSD{G}{H} \neq \emptyset$.
\end{proposition}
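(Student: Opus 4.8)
The plan is to first read off, from the algebra, the concrete shape of the guest $G$: its vertices and edges are exactly those of $Q$, its alphabet is $\Sigma$, the must, unique and exclusive sets all equal $V_Q$ (each vertex carries the full decoration $\{\MustSym\UniqueSym\InjectiveSym\}$, so $\Must=\Unique=\Injective=V_Q$), and the choice function is $\Choice(v) = \{\out(v)\}$ for every $v$ — the single set of all outgoing edges, which degenerates to $\{\emptyset\}$ precisely when $v$ is a sink (this is the purpose of the $\{\varnothing \mid \out(v)=\emptyset\}$ summand). The point of this reading is that \cref{ls:c4} then forces, at every matched vertex, \emph{all} of its outgoing guest-edges to be simulated; together with $\Must = V_Q$ this is what makes an LGS encode a full embedding of $Q$ rather than a partial match.

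For the direction from subgraph isomorphism to LGS, I would take a witnessing pair of injections $\phi\colon V_Q \injarrow V_H$ and $\eta\colon E_Q \injarrow E_H$ and set $V^{G\to H} = \{(v,\phi(v)) \mid v \in V_Q\}$ and $E^{G\to H} = \{((u,\phi(u)),\sigma(e),(w,\phi(w))) \mid e=(u,\sigma(e),w)\in E_Q\}$. The compatibility equations on $\eta$ give $(\phi(u),\sigma(e),\phi(w)) = \eta(e) \in E_H$, so $S$ is a genuine subgraph of $G\times H$. I would then discharge the conditions one by one: \cref{ls:c1} holds because $\phi$ is total, \cref{ls:c2} because $\phi$ is single-valued, \cref{ls:c3} because $\phi$ is injective, and both halves of \cref{ls:c4} because the unique choice set $\out(u)$ is matched edge-for-edge by construction. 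Condition \cref{ls:c5} then comes for free from \cref{prop:ls5_elim}, since $\Choice(v)=\{\out(v)\}$ is exactly its hypothesis.

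For the converse I would extract the two injections from an arbitrary LGS $S$. \cref{ls:c1,ls:c2} make the relation $V^{G\to H}$ total and single-valued on $V_Q$, hence a function $\phi$, and \cref{ls:c3} (every vertex is exclusive) makes $\phi$ injective. To build $\eta$, I observe that for each $e=(u,l,w)\in E_Q$ the pair $(u,\phi(u))$ lies in $V^{G\to H}$, so the first half of \cref{ls:c4}, applied to the only choice set $\out(u)\ni e$, yields an edge $((u,\phi(u)),l,(w,w'))\in E^{G\to H}$; being an edge of $G\times H$ forces $(w,w')\in V^{G\to H}$, whence $w'=\phi(w)$ by single-valuedness, and forces $(\phi(u),l,\phi(w))\in E_H$. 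I then set $\eta(e) \triangleq (\phi(u),l,\phi(w))$, which satisfies the three compatibility equations by inspection and is injective because $\phi$ is injective and an edge is determined by its source, label and target.

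The only genuine subtlety — and the step I would treat most carefully — is justifying $\Choice(v)=\{\out(v)\}$ from the term $\bigotimes_{e\in E_Q}(s(e)\xrightarrow{\sigma(e)}t(e))$ combined by $\oplus$ with the unary summands, i.e.\ checking that multiplying all the single-edge arrows collapses, at each source vertex, to the one choice set containing exactly its outgoing edges while leaving sinks with the empty choice set. Everything after that is bookkeeping; in particular the advertised NP-hardness is immediate, since $G$ is plainly constructible from $Q$ in polynomial time and subgraph isomorphism is NP-complete.
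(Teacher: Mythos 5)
Your proposal is correct and follows essentially the same route as the paper's own proof: read off $\Must=\Unique=\Injective=V_Q$ and $\Choice=\lambda x.\{\out(x)\}$ from the algebraic term, build the LGS from $(\phi,\eta)$ and verify \cref{ls:c1,ls:c2,ls:c3,ls:c4} directly with \cref{ls:c5} discharged by \cref{prop:ls5_elim}, and conversely extract $\phi$ and $\eta$ from the LGS using the must/unique/exclusive decorations and \cref{ls:c4}. Your justification of $\eta$'s injectivity (via injectivity of $\phi$ plus the fact that an edge is determined by its source, label and target) is slightly more explicit than the paper's, but the argument is the same.
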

\begin{proof}
		From the definition of $G$, its must, unique and exclusive sets, as well as its choice function, are $\Must = \Unique = \Injective = V_Q$ and  $\Choice = \lambda x.\{\out(x)\}$ respectively.
	 Suppose $\phi : V_Q \injarrow V_H$ and $\eta: E_Q \injarrow E_H$ be two injections as in \cref{def:subgraph_isomorphism}.
	 Then the graph $S = (\Sigma, V^{G\to H}, E^{G \to H})$ where $V^{G\to H} \triangleq \{ (u,u^\prime) \mid u^\prime = \phi(u)\}$ and $E^{G \to H} \triangleq \{ ((u,u^\prime),a,(v,v^\prime)) \mid (u^\prime,a,v^\prime) = \eta((u,a,v))\}$
	 form a LGS for $G$. Indeed, it satisfy \cref{ls:c1,ls:c2,ls:c3}, since $\phi$ is an injection.
	 Moreover, since $\eta: E_Q \injarrow E_H$ is also an injection and for each edge $e \in E_Q$ it holds that $\sigma(e) = \sigma \circ \eta(e)$, $\phi \circ s(e) = s \circ \eta(e)$ and
	 $\phi \circ t(e) = t \circ \eta(e)$,
	 $S$ must be such that for each $(u,u^\prime) \in V^{G\to H}$ and for each $(u,a,v) \in \out(u)$ there exists $v^\prime$ such that $(v,v^\prime) \in V^{G \to H}$ and $((u,u^\prime),a,(v,v^\prime)) \in E^{G \to H}$. It follows that
	 $S$ is a subgraph of $G \times H$ and \cref{ls:c4} is satisfied, since $\Choice(u) = \{\out(u)\}$. Moreover the satisfaction of \cref{ls:c5} follows from \cref{prop:ls5_elim}. $S$ is therefore a LGS of $G$ in $H$.
	 Conversely, suppose that there is a LGS $S = (\Sigma, V^{G\to H}, E^{G \to H})$. Let $\phi$ s.t.~$\phi(u) = u^\prime \iff (u,u^\prime) \in V^{G \to H}$ and $\eta$ s.t.~$
		 \eta((u,a,v)) = (u^\prime,a,v^\prime) \iff ((u,u^\prime),a,(v,v^\prime)) \in E^{G \to H}$.
	 Since $\Must = \Unique = \Injective = V_Q$ and $S$ is a LGS, it holds that $\phi$ is an injection defined on the domain $V_Q$. Moreover $\eta$ is also an injection, since $\Choice = \lambda x. \{\out(x)\}$ and $S$ satisfies \cref{ls:c4}, and together with the hypothesis that $S$ is a subgraph of $G\times H$ it must also hold that for each edge
	 $e \in E_Q$ $\sigma(e) = \sigma \circ \eta(e)$,
	 $\phi \circ s(e) = s \circ \eta(e)$ and
	 $\phi \circ t(e) = t \circ \eta(e)$. There is therefore a subgraph of $H$ isomorphic to $Q$.
\end{proof}

\begin{figure}[t!]
	\centering
		\begin{tikzpicture}[node distance=1.5cm]

		%QUERY

		\node [ssplace] (a) [] {};
		\node [ssplace] (b) [above right of=a] {};
		\node [ssplace] (c) [above left of=a] {};
		\node [ssplace] (d) [below left of=a] {};

		\path [blackarrow]
		(a) edge [->] node[below right=-3pt] {$b$} (b)
		(c) edge [->] node[above=-3pt] {$b$} (b)
		(c) edge [->] node[above right=-3pt] {$a$} (a)
		(a) edge [->] node[below right=-3pt] {$a$} (d)
		(d) edge [->] node[left=-3pt] {$a$} (c);

		%GUEST

		\node [ssplace] (a1) [right=5.5cm of a] {\MustSym\UniqueSym\InjectiveSym};
		\node [ssplace] (b1) [above right of=a1] {\MustSym\UniqueSym\InjectiveSym};
		\node [ssplace] (c1) [above left of=a1] {\MustSym\UniqueSym\InjectiveSym};
		\node [ssplace] (d1) [below left of=a1] {\MustSym\UniqueSym\InjectiveSym};

		\path [name path=aC] (a1) circle (0.7cm);
		\path [name path=bC] (b1) circle (0.7cm);
		\path [name path=cC] (c1) circle (0.7cm);
		\path [name path=dC] (d1) circle (0.7cm);

		\draw [blackarrow, name path=ab]
		(a1) -- node[below right=-3pt] {$b$} (b1);
		\draw [blackarrow, name path=cb]
		(c1) -- node[above=-3pt] {$b$} (b1);
		\draw [blackarrow, name path=ca]
		(c1) -- node[above right=-3pt] {$a$} (a1);
		\draw [blackarrow, name path=ad]
		(a1) -- node[below right=-3pt] {$a$} (d1);
		\draw [blackarrow, name path=dc]
		(d1) -- node[left=-3pt] {$a$} (c1);

		%a
		\path [name intersections={of=aC and ab,by=Ab}];
		\draw [blackdot] (Ab) circle (2pt);

		\path [name intersections={of=ad and aC,by=Ad}];
		\draw [blackdot] (Ad) circle (2pt);

		\draw[blackline] ([shift=(Ab)]0,0) arc (40:-125:0.7cm);

		%c
		\path [name intersections={of=cC and cb,by=Cb}];
		\draw [blackdot] (Cb) circle (2pt);

		\path [name intersections={of=cC and ca,by=Ca}];
		\draw [blackdot] (Ca) circle (2pt);

		\draw[blackline] ([shift=(Cb)]0,0) arc (0:-45:0.7cm);

		%d
		\path [name intersections={of=dC and dc,by=Dc}];
		\draw [blackdot] (Dc) circle (2pt);

		\node (e) [above right of=b1] {};
		\path [name path=pa] (b1) -- (e);

		\path [name intersections={of=pa and bC, by=E}];
		\node (EL) [left=0.2cm of E] {};
		\node (ER) [right=0.2cm of E] {};

		\path [blackline] (EL) edge [] (ER);
		\path [blackline] (b1) edge [bend right=30] (E);
		\end{tikzpicture}
	\caption{A possible query for subgraph isomorphism (on the left) and its translation to a guest for LGSs (on the right).}\label{fig:subgraph_isomorphism}
\end{figure}
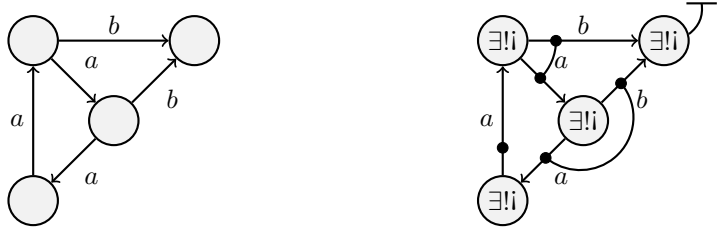

Note how the translation from subgraph isomorphism's queries to guest for LGSs defined in \cref{prop:SubIso_via_LS} is \emph{structure-preserving}.
Indeed, an example of this can be seen in \cref{fig:subgraph_isomorphism}.
This property is important since it makes defining LGSs' guests to solve the subgraph isomorphism problem as intuitive as the respective queries for it. This is also the case for other notions commonly used in the graphs' pattern matching community.
Moreover, since the translated guest is as intuitive as the original query, this property strengthens the idea of using guests and LGSs to represent and compute hybrid queries w.r.t.~these notions.

From \cref{theo:LS_check_in_P} and \cref{prop:SubIso_via_LS} it follows that:
\begin{theorem}
	The emptiness problem for LGSs is NP-complete.
\end{theorem}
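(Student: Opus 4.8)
The plan is to assemble the theorem from the two facts established immediately before its statement, so no new machinery is required. Membership in NP is furnished directly by \cref{theo:LS_check_in_P}: that proposition exhibits a polynomial-time procedure deciding whether a candidate subgraph $S$ of $G \times H$ satisfies every condition of \cref{def:loose_simulation}. Since such an $S$ has vertex set contained in $V_G \times V_H$ and edge set contained in the product edge set $E^\times$, its size is polynomial in $|G|$ and $|H|$; it therefore serves as a polynomial certificate, and guessing $S$ followed by running the verifier of \cref{theo:LS_check_in_P} witnesses that the emptiness problem lies in NP.

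For NP-hardness I would reduce from the subgraph isomorphism problem, which is NP-complete by Cook \cite{DBLP:conf/stoc/Cook71}. Given an instance consisting of a host $H$ and a query $Q$ as in \cref{def:subgraph_isomorphism}, construct the guest $G$ exactly as in \cref{prop:SubIso_via_LS}. By that proposition, $H$ admits a subgraph isomorphic to $Q$ if and only if $\SSD{G}{H} \neq \emptyset$. Hence the map $(H,Q) \mapsto (G,H)$ is a correct many-one reduction from the (nonemptiness formulation of the) subgraph isomorphism problem to the complement of LGS emptiness, which suffices since NP is closed under the requisite complementation of the answer.

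The only point not already discharged by \cref{prop:SubIso_via_LS} is that this reduction is computable in polynomial time, and this is routine: $G$ is the $\oplus$-combination of one unary elementary guest $v_{\{\exists\UniqueSym\InjectiveSym\}\cup\{\varnothing\mid\out(v)=\emptyset\}}$ per vertex $v \in V_Q$ together with the $\otimes$-product of one arrow guest $s(e) \xrightarrow{\sigma(e)} t(e)$ per edge $e \in E_Q$. Thus $|V_G| = |V_Q|$ and $|E_G| = |E_Q|$, while the sets $\Must = \Unique = \Injective = V_Q$ and the choice function $\Choice = \lambda x.\{\out(x)\}$ are written down in time linear in $|Q|$. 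Combining polynomial-time membership in NP with a polynomial-time reduction from an NP-complete problem yields NP-completeness. I do not anticipate any genuine obstacle, since both directions of correctness are inherited verbatim from the cited propositions; the entire content of the argument is the observation that the construction inflates the instance only polynomially.
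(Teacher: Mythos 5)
Your proposal is correct and follows exactly the paper's own route: the theorem is obtained by combining \cref{theo:LS_check_in_P} (membership in NP via the polynomial-time verifier for the conditions of \cref{def:loose_simulation}) with \cref{prop:SubIso_via_LS} (hardness via the reduction from subgraph isomorphism). Your additional observation that the guest $G$ is computable in time polynomial in $|Q|$ is a detail the paper leaves implicit, and your remark about the emptiness/non-emptiness phrasing is a harmless clarification rather than a deviation.
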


\section{Graph simulations are loose graph simulations}\label{sec:GraphSim_via_LS}

Graph simulations are particular relations between graphs that are extensively applied in several fields
\cite{DBLP:conf/icdt/Fan12,DBLP:journals/pvldb/FanWWD14}.
The \emph{graph simulation problem} requires to check whether a portion of a graph (host) \emph{simulates} another graph (query).

\begin{definition}\label{def:graph_simulation}
	A \emph{graph simulation} of $Q = (\Sigma,V_Q,E_Q)$ (herein \emph{query}) in $H = (\Sigma,V_H,E_H)$ (herein \emph{host}) is a relation ${\R \subseteq V_Q \times V_H}$ such that:
	\begin{itemize}% [nosep,noitemsep]
		\item for each node $u \in V_Q$ there exists a node $v \in V_H$ such that $(u,v) \in \R$;
		\item for each pair $(u,v) \in \R$ and for each edge $e \in \out(u)$ there exists an edge $e^\prime \in \out(v)$ such that $\sigma(e) = \sigma(e^\prime)$ and $(t(e),t(e^\prime)) \in \R$.
	\end{itemize}
\end{definition}

Graph simulation existence can be decided in polynomial time \cite{DBLP:journals/scp/BloomP95,DBLP:conf/focs/HenzingerHK95}.
Their emptiness problem can be reduced to the emptiness problem for loose ones.

\begin{proposition}\label{prop:GraphSim_via_LS}
	Let $H = (\Sigma,V_H,E_H)$ and $Q = (\Sigma,V_Q,E_Q)$ be a host and a query for graph simulation respectively. Moreover, let
	\begin{equation*}
		G = \bigoplus_{v \in V_Q} v_{\{\exists\}\cup\{\varnothing \mid \out(v) = \emptyset \}} \oplus \bigotimes_{e \in E_Q} s(e) \xrightarrow{\sigma(e)} t(e)
	\end{equation*}
	Then, there is a graph simulation of $Q$ in $H$ iff $\SSD{G}{H} \neq \emptyset$.
\end{proposition}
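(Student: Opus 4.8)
The plan is to exploit the near-identity between the two definitions by identifying a graph simulation relation $\R \subseteq V_Q \times V_H$ with the vertex set $V^{G\to H}$ of an LGS, which is legitimate since $V_G = V_Q$. First I would unfold the algebraic definition of $G$ to read off its decorations, exactly as in the proof of \cref{prop:SubIso_via_LS}: the outer sum decorates every vertex with $\exists$, so $\Must = V_Q$, while no vertex carries $\UniqueSym$ or $\InjectiveSym$, giving $\Unique = \Injective = \emptyset$; and the multiplication of the per-edge arrow terms collapses the singleton choice sets into a single set per source, yielding $\Choice = \lambda x.\{\out(x)\}$ (which is $\{\emptyset\}$ at sinks, where the $\varnothing$ decoration supplies the empty choice set). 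This immediately renders \cref{ls:c2,ls:c3} vacuous and, via \cref{prop:ls5_elim}, makes \cref{ls:c5} a consequence of \cref{ls:c4}. The statement therefore reduces to matching \cref{ls:c1} against the first clause of \cref{def:graph_simulation} and \cref{ls:c4} against its second clause.

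For the forward direction, given a graph simulation $\R$ I would set $V^{G\to H} \triangleq \R$ and $E^{G\to H} \triangleq \{((u,u'),a,(v,v')) \in E^\times \mid (u,u'),(v,v') \in \R\}$, i.e.\ all product edges between related pairs. This is a subgraph of $G \times H$ by construction. \Cref{ls:c1} follows from totality of $\R$ on $V_Q = \Must$. For \cref{ls:c4}, fix $(u,u') \in \R$ and $(u,a,v) \in \out(u)$; the second simulation clause supplies $(u',a,v') \in \out(u')$ with $\sigma$ matching and $(t(e),t(e')) = (v,v') \in \R$, so $((u,u'),a,(v,v'))$ is a product edge lying in $E^{G\to H}$. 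Since $\out(u)$ is the unique choice set, both clauses of \cref{ls:c4} hold, and the remaining conditions follow as above; hence $\SSD{G}{H} \neq \emptyset$.

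For the converse, given an LGS $S = (\Sigma, V^{G\to H}, E^{G\to H})$ I would put $\R \triangleq V^{G\to H}$. Totality of $\R$ on $V_Q$ is precisely \cref{ls:c1} with $\Must = V_Q$. For the second simulation clause, I would feed $(u,u') \in V^{G\to H}$ and any $(u,a,v) \in \out(u)$ (the sole choice set) into \cref{ls:c4} to obtain $v'$ with $((u,u'),a,(v,v')) \in E^{G\to H}$; because $S$ is a subgraph of $G \times H$, this edge is a product edge, so $(u',a,v') \in E_H$ carries the matching label, and its target $(v,v')$ lies in $V^{G\to H} = \R$, which is exactly what the clause demands. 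Thus $\R$ is a graph simulation.

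I expect the only delicate point to be the bookkeeping in the first step: confirming that \emph{multiplying} (rather than adding) the single-edge arrow terms yields exactly $\Choice(v) = \{\out(v)\}$ at each source, so that \cref{ls:c4} forces \emph{all} outgoing edges of $u$ to be matched simultaneously, as \cref{def:graph_simulation} requires, rather than one alternative at a time. Everything else is a routine clause-by-clause translation mirroring \cref{prop:SubIso_via_LS}, with the added convenience that \cref{prop:ls5_elim} discharges \cref{ls:c5} for free.
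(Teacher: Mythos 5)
Your proposal is correct and follows essentially the same route as the paper's proof: the same reading-off of $\Must = V_Q$, $\Unique = \Injective = \emptyset$, $\Choice = \lambda x.\{\out(x)\}$, the same construction of $V^{G\to H} = \R$ with all product edges between related pairs, the same clause-by-clause matching of \cref{ls:c1} and \cref{ls:c4} against the two simulation clauses, and the same appeal to \cref{prop:ls5_elim} to discharge \cref{ls:c5}. The ``delicate point'' you flag about $\otimes$ versus $\oplus$ is indeed the right thing to check, and your resolution of it is correct.
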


\begin{proof}
	From definition of $G$, its must, unique, exclusive sets and its choice function are $\Must = V_Q$, $\Unique = \Injective = \emptyset$ and $\Choice = \lambda x. \{\out(x)\}$ respectively.
	Let $\R$ be a graph simulations. The graph $S = (\Sigma,V^{G \to H}, E^{G \to H})$ where $V^{G \to H} = \R$ and $E^{G \to H} = \{((u,u^\prime),a,(v,v^\prime)) \mid
		(u,u^\prime),(v,v^\prime) \in R, (u,a,v) \in E_Q, (u^\prime,a,v^\prime) \in E_H\}$
	is a loose graph simulations for $G$. $\Unique = \Injective = \emptyset$ makes \cref{ls:c2,ls:c3} always true, whereas the first condition of \cref{def:graph_simulation}, that requires all vertices of $V_Q$ to appear in the first projection of $\R$, makes \cref{ls:c1} satisfied.
	The second condition of \cref{def:graph_simulation} requires that, given a pair $(u,v) \in R$, every edge of $\out(u)$ is associated with one edge of $\out(v)$ with the same label and with targets paired in $\R$. \Cref{ls:c4} is therefore satisfied. Lastly, the satisfaction of \cref{ls:c5} follows from \cref{prop:ls5_elim}. $S$ is therefore a loose graph simulation of $G$ in $H$.
	Conversely, suppose there exists a LGS $S = (\Sigma, V^{G \to H}, E^{G \to H})$. Then $V^{G \to H}$ is a graph simulation. The definition of must set $\Must = V_Q$ ensures that each vertex of $V_Q$ must appear in the first projection of $V^{G \to H}$: the first condition of \cref{def:graph_simulation} is  satisfied.
	Moreover, the definition of the choice function $\Choice = \lambda x.\{\out(x)\}$ and \cref{ls:c4} implies that for each $(u,u^\prime) \in V^{G \to H}$ and for all $(u,a,v) \in \out(u)$ there exists $v^\prime$ such that $((u,u^\prime),a,(v,v^\prime)) \in E^{G \to H}$ and,
	since $S$ is a subgraph of $G \times H$, $(v,v^\prime) \in V^{G \to H}$. Thus, the second condition of \cref{def:graph_simulation} holds and $V^{G \to H}$ is a graph simulation.
\end{proof}
\begin{figure}[t!]
	\centering
		\begin{tikzpicture}[node distance=2cm]

		%QUERY

		\node [ssplace] (a0) [] {};
		\node [ssplace] (b0) [right of=a0] {};

		\path [blackarrow]
		(a0) edge [in=110, out=180, loop, ->] node[above left=-3pt] {$a$} (a0)
		(a0) edge [->] node[above=-3pt] {$b$} (b0);

		%GUEST

		\node [ssplace] (a) [right=3.5cm of b0] {\MustSym};
		\node [ssplace] (b) [right of=a] {\MustSym};
		\node (void) [left of=a] {};

		\path [name path=aC] (a) circle (0.55cm);
		\path [name path=bC] (b) circle (0.7cm);

		\draw [blackarrow]
		(a) edge [in=110, out=180, loop] node[above left=-3pt] {$a$} (a);
		\draw [blackarrow, name path=ab]
		(a) -- node[above=-3pt] {$b$} (b);
		\path [name path=aa]
		(a) -- (void);

		\path [name intersections={of=aa and aC,by=Aa}];
		\draw [blackdot] (Aa) circle (2pt);

		\path [name intersections={of=ab and aC,by=Ab}];
		\draw [blackdot] (Ab) circle (2pt);

		\draw[blackline] ([shift=(Ab)]0,0) arc (0:-180:0.55cm);

		\node (e) [above right of=b] {};
		\path [name path=pa] (b) -- (e);

		\path [name intersections={of=pa and bC, by=E}];
		\node (EL) [left=0.2cm of E] {};
		\node (ER) [right=0.2cm of E] {};

		\path [blackline] (EL) edge [] (ER);
		\path [blackline] (b) edge [bend right=30] (E);
		\end{tikzpicture}
	\caption{A possible query for \emph{graph simulation} (on the left) and its translation in a guest for loose graph simulations (on the right).}\label{fig:graph_simulation}
\end{figure}
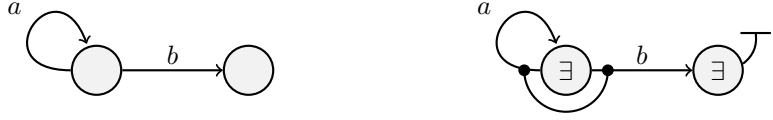

\begin{example}
\cref{fig:graph_simulation} shows a query for GSs and the equivalent guest for LGSs. As seen in \cref{sec:SubIso_via_LS}, the translation preserve the structure of the graph.
\end{example}

\section{Regular languages pattern matching}
\label{sec:RE_via_LS}
Regular languages defines finite sequences of characters (called \emph{words} or \emph{strings}) from a finite alphabet $\Sigma$ \cite{DBLP:books/daglib/0011126}.
Although widely used in text pattern matching, they are also used in graph pattern matching \cite{DBLP:journals/jacm/BarceloLR14,DBLP:journals/siamcomp/MendelzonW95}.
In this section we will restrict ourselves to $\epsilon$-free regular languages, \ie regular languages without the empty word $\epsilon$ \cite{DBLP:journals/tcs/Ziadi96}.
This restriction is quite common, since the empty word is matched by any text or graph and thus it does not represent a meaningful pattern.

\begin{definition} 
	Let $\Sigma$ be an alphabet. $\emptyset$ is a \emph{$\epsilon$-free regular language}. For each $a \in \Sigma$, $\{a\}$ is a \emph{$\epsilon$-free regular language}.
	If $A$ and $B$ are \emph{$\epsilon$-free regular language}, so are $A\cdot B \triangleq \{ vw \mid v \in P \land w \in Q\}$, $A \mid B \triangleq A \cup B$, and $A^+ \triangleq \bigcup_{n \in \N} A^{n+1}$
\end{definition}
In \cite{DBLP:journals/tcs/Ziadi96} it is shown that every regular language without the \emph{empty} string $\epsilon$ can be expressed with the operations defined for \emph{$\epsilon$-free regular languages}.
We will now introduce the pattern matching problem for non-empty $\epsilon$-free regular languages.
In the following let $H = (\Sigma,V_H,E_H)$ and $\Lang$ be respectively a host and a $\epsilon$-free regular language such that $\Lang \neq \emptyset$.

\begin{definition}
	The emptiness problem for regular language pattern matching (RLPM) consist in checking if there is a path $\rho \in \Path_H$ such that $\sigma(\rho) \in \Lang$.
\end{definition}

To solve this problem using LGSs we will use the equivalence between regular languages and non-deterministic finite automata \cite{DBLP:journals/cacm/Thompson68}.

\begin{definition}
	An NFA is a tuple, $N = (\Sigma,Q,\Delta,q_0,F)$ consisting of an alphabet $\Sigma$, a finite set of states $Q$, an \emph{initial} state $q_0$, a set of \emph{accepting} (or \emph{final}) states $F\subseteq Q$ and a \emph{transition} function $\Delta\colon Q \times \Sigma \to \Pwset(Q)$.
	Let $w = a_0,a_1,\dots,a_n$ be a word in $\Sigma^*$.
	The NFA $N$ accepts $w$ if there is a sequence of states $r_0,r_1,\dots,r_{n+1}$ in $Q$ such that $r_0 = q_0$, $r_{i+1} \in \Delta(r_i,a_i)$ for $i = 0,\dots,n$, and $r_{n+1} \in F$. With $\Lang(N)$ we denote the set of words accepted by $N$, \ie its accepted language.
\end{definition}

\begin{remark}
	Any non-empty regular language without $\epsilon$ can be translated to a non-deterministic finite automaton (NFA) with one initial state (say $q_0^\prime$), one final state (say $f$) and s.t.~$\mathrm{in}(q_0^\prime) = \emptyset$ and 	$\out(f) = \emptyset$. Indeed, for $N = (\Sigma,Q,\Delta,q_0,F)$ any NFA s.t.~$\Lang(N) \neq \emptyset$ and $\varepsilon \notin \Lang(N)$
	define $N^\prime = (\Sigma,Q\cup \{q^\prime_0,f\},\Delta^\prime,q^\prime_0, \{f\})$ where:
	\begin{itemize}
		\item for all $a \in \Sigma$, $\Delta^\prime(q^\prime_0,a) \triangleq \Delta(q_0,a)$ and $\Delta^\prime(f,a) = \emptyset$;
		\item for all $q \in Q$ and $a \in \Sigma$, $\Delta^\prime(q,a) \triangleq \Delta(q,a) \cup \{f \mid F \cap \Delta(q,a) \neq \emptyset\}$.
	\end{itemize}
	By construction $\Lang(N) = \Lang(N^\prime)$, $\mathrm{in}(q^\prime_0) = \emptyset$, and $\out(f) = \emptyset$.
\end{remark}

\begin{proposition}\label{prop:RE_via_LS}
	Let $N = (Q,\Sigma,\Delta,q_0,\{f\})$ be a NFA where the initial state $q_0$ does not have any incoming transitions and the only final state $f$ does not have any outgoing ones. Let $H = (\Sigma,V_H,E_H)$ be a host. Let
	\begin{equation*}
	G = {q_0}_{\{\exists\}} \oplus f_{\{\exists,\varnothing\}} \oplus\bigoplus_{\substack{q \in Q\text{, }a \in \Sigma\text{,}q^\prime \in \Delta(q,a)}}\left(q \xrightarrow{a} q^\prime \right)
	\end{equation*}
	Then, there exists a path $\rho \in \Path_H$ in $H$ s.t.~$\sigma(\rho)$ is accepted by $N$ iff there exists a loose graph simulation of $G$ in $H$, \ie $\SSD{G}{H} \neq \emptyset$.
\end{proposition}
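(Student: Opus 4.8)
The plan is to set up a tight correspondence between accepting computations of $N$ and loose graph simulations of $G$ in $H$, with the structure of $G$ doing the translation. First I would unfold the algebraic definition of $G$ to read off its components: the vertex set is $Q$, the edges are exactly $\{(q,a,q') \mid q' \in \Delta(q,a)\}$ (one per transition of $N$), the must set is $\Must = \{q_0, f\}$, and $\Unique = \Injective = \emptyset$. The crucial point is the shape of the choice function: since $G$ is built by $\oplus$ of arrow operators, $\Choice$ is \emph{linear} on every state other than $f$, i.e.\ $\Choice(q) = \{\{e\} \mid e \in \out(q)\}$, whereas $\Choice(f) = \{\emptyset\}$ (the empty choice set coming from the $\varnothing$ attribute of $f_{\{\exists,\varnothing\}}$, $f$ having no outgoing edges). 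I would then record the translation invariant: a $G$-edge $(q,a,q')$ witnesses $q' \in \Delta(q,a)$, so paths in $G$ from $q_0$ to $f$ correspond exactly to accepting runs of $N$; and since $\Lang(N) = \Lang \neq \emptyset$, we have $\Path_G(q_0, f) \neq \emptyset$.

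For the forward implication, suppose $\rho = (e_0, \dots, e_n) \in \Path_H$ has $\sigma(\rho) = a_0 \cdots a_n$ accepted by $N$, and fix an accepting run $r_0 = q_0, r_1, \dots, r_{n+1} = f$ with $r_{i+1} \in \Delta(r_i, a_i)$. Writing $w_0, \dots, w_{n+1}$ for the vertices of $\rho$ (so $w_i = s(e_i)$ and $w_{n+1} = t(e_n)$), I would ``zip'' the run and the host path into the subgraph $S$ with $V^{G\to H} = \{(r_i, w_i) \mid 0 \le i \le n+1\}$ and $E^{G\to H} = \{((r_i,w_i), a_i, (r_{i+1},w_{i+1})) \mid 0 \le i \le n\}$; this sits inside $G \times H$ because each step combines a $G$-edge $(r_i, a_i, r_{i+1})$ with the $H$-edge $e_i$ of matching label. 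Conditions \cref{ls:c1} (both $q_0=r_0$ and $f=r_{n+1}$ occur) and \cref{ls:c2,ls:c3} (vacuous) are immediate. For \cref{ls:c4} I would exploit linearity: at each $(r_i, w_i)$ with $i \le n$ the singleton $\{(r_i, a_i, r_{i+1})\} \in \Choice(r_i)$ is matched, and every outgoing $S$-edge is itself such a matched singleton, so both halves hold even if the path revisits a pair; at $(f, w_{n+1})$ the choice $\emptyset \in \Choice(f)$ works since $f$ has no outgoing $S$-edges. Finally \cref{ls:c5} holds because the requirement for $q_0$ is vacuous ($q_0$ has no incoming $G$-edge, so no vertex has a $G$-path to it), while for $f$ the suffix of the zipped path runs from any $(r_i,w_i)$ to $(f, w_{n+1})$.

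For the converse, I would assume a loose graph simulation $S = (\Sigma, V^{G\to H}, E^{G\to H})$ exists and extract a host path. By \cref{ls:c1} there is a pair $(q_0, u_0) \in V^{G\to H}$, and since $f \in \Must$ with $\Path_G(q_0, f) \neq \emptyset$, \cref{ls:c5} yields a path in $S$ from $(q_0, u_0)$ to some $(f, u')$. Projecting this $S$-path onto its first components gives a sequence $q_0 = r_0, r_1, \dots, r_{k+1} = f$ with $r_{i+1} \in \Delta(r_i, a_i)$ (because $S \subseteq G \times H$ and $G$-edges encode transitions), i.e.\ an accepting run on $w = a_0 \cdots a_k$; projecting onto the second components and reading the labels gives a path $\rho \in \Path_H$ with $\sigma(\rho) = w \in \Lang(N)$. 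Note that here only \cref{ls:c1,ls:c5} and the subgraph property are needed.

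I expect the main obstacle to be bookkeeping around the choice function rather than any deep difficulty. One must verify that the \emph{linear} $\Choice$ makes \cref{ls:c4} satisfiable by following a \emph{single} transition at each state, in contrast to the subgraph-isomorphism encoding of \cref{prop:SubIso_via_LS} where $\Choice = \lambda x.\{\out(x)\}$ forces all transitions at once; the subtle case worth checking carefully is a revisited pair $(r_i, w_i) = (r_j, w_j)$, which is exactly where singleton choices pay off. The conceptual crux is to pinpoint that \cref{ls:c5} is precisely the ingredient that upgrades the mere co-occurrence of $q_0$ and $f$ (guaranteed by \cref{ls:c1}) into an actual connecting computation, thereby driving the converse direction.
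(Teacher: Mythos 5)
Your proposal is correct and follows essentially the same route as the paper: zip the accepting run with the host path to build the LGS in the forward direction, and in the converse use \cref{ls:c1} to obtain a pair $(q_0,u_0)$ and \cref{ls:c5} to connect it to an occurrence of $f$, then project. If anything, your write-up is slightly more careful than the paper's, e.g.\ in making explicit that the linear choice function handles revisited pairs in \cref{ls:c4} and that \cref{ls:c1} is needed before \cref{ls:c5} can be invoked in the converse.
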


\begin{proof}
	It follows from definition of acceptance that if there is $(e_0,\dots,e_n) \in \Path_H$ such that $\sigma(\rho)$ is accepted by $N$ then, there is a sequence
	\begin{equation*}
(p_0,s(e_0)) \xrightarrow{\sigma(e_0)} (p_1,s(e_1)) \xrightarrow{\sigma(e_1)} \dots
	\xrightarrow{\sigma(e_{n-1})} (p_n,s(e_n))
	\xrightarrow{\sigma(e_n)} (p_{n+1}, t(e_n))
	\end{equation*}
	such that $p_0 = q_0$ and $p_{n+1} = f$; for all $i \in \{1,\dots,n\}$ $t(e_{i-1}) = s(e_i)$; for all $i \in \{0,\dots,n\}$ $p_{i+1} \in \Delta(p_i)$.
	Regard the sequence as a graph, say $S$, then $S \in \SSD{G}{H}$ since $S$ is a subgraph of $G \times H$ and $G$ is constructed from $N$ by preserving its transition relation $\Delta$. \Cref{ls:c1,ls:c2,ls:c3} hold since $p_0 = q_0$, $p_n = f$ and $\Unique = \Injective = \emptyset$.
	\Cref{ls:c4} holds since $\{(p_i,\sigma(e_i),p_{i+1})\} \in \Choice(p_i)$ for any $i \in \{0,\dots,n\}$ by construction.
	\Cref{ls:c5} holds since projecting the graph to its first component yields a path from $q_0$ to $f$. Representing $G$ requires space polynomial in the size of $N$.
	Conversely, if there is $S \in \SSD{G}{H}$ then \ref{ls:c5} ensures that there is a path $\rho = (e_0,\dots,e_n)$ in it such that $\pi_1 \circ s(\rho) = q_0$ and $\pi_1 \circ t(\rho) = f$.
	It follows from definition of $E$ that the path $\rho$ is coherent with $\Delta$, \ie $\forall i \in \{0,\dots,n\}\ \pi_1\circ t(e_i) \in \Delta \circ \pi_1\circ s(e_i)$.
	Thus, the the sequence of labels
	\[
		\sigma(\pi_2(\rho)) = ((\pi_2 \circ s(e_0), \sigma(e_0), \pi_2 \circ t(e_0)),\dots,(\pi_2 \circ s(e_n), \sigma(e_n), \pi_2 \circ t(e_n))),
	\]
	 in the second projection of $\rho$ is such that $\sigma(\pi_2(\rho))$ is accepted by $N$.
\end{proof}

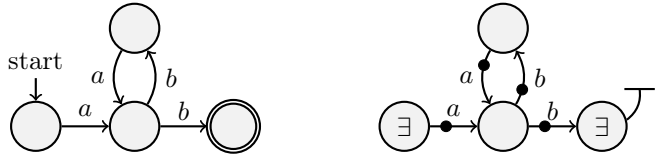
\begin{figure}[t!]
	\centering
		\begin{tikzpicture}[node distance=1.3cm]
		%QUERY
		\node [ssplace] (a1) [] {};
		\node [ssplace] (b1) [right of=a1] {};
		\node [ssplace] (c1) [above of=b1] {};
		\node [ssplace,accepting] (f1) [right of=b1] {};

		\node (start) [above=0.3cm of a1] {start};

		\path [blackarrow]
		(a1) edge [->] node[above=-3pt] {$a$} (b1)
		(b1) edge [bend right, ->] node[right=-3pt] {$b$} (c1)
		(c1) edge [bend right, ->] node[left=-3pt] {$a$} (b1)
		(b1) edge [->] node[above=-3pt] {$b$} (f1)
		(start) edge [->] (a1);

		%GUEST

		\node [ssplace] (a) [right=1.56cm of f1] {\MustSym};
		\node [ssplace] (b) [right of=a] {};
		\node [ssplace] (c) [above of=b] {};
		\node [ssplace] (f) [right of=b] {\MustSym};

		\path [name path=aC] (a) circle (0.55cm);
		\path [name path=bC] (b) circle (0.55cm);
		\path [name path=cC] (c) circle (0.55cm);
		\path [name path=fC] (f) circle (0.7cm);

		\draw [blackarrow, name path=ab]
		(a) -- node[above=-3pt] {$a$} (b);
		\draw [blackarrow]
		(b) edge [name path=bc, bend right] node[right=-3pt] {$b$} (c);
		\draw [blackarrow]
		(c) edge [name path=cb, bend right] node[left=-3pt] {$a$} (b);
		\draw [blackarrow, name path=bf]
		(b) -- node[above=-3pt] {$b$} (f);

		\path [name intersections={of=ab and aC,by=Ab}];
		\draw [blackdot] (Ab) circle (2pt);

		\path [name intersections={of=bc and bC,by=Bc}];
		\draw [blackdot] (Bc) circle (2pt);

		\path [name intersections={of=cb and cC,by=Cc}];
		\draw [blackdot] (Cc) circle (2pt);

		\path [name intersections={of=bf and bC,by=Bf}];
		\draw [blackdot] (Bf) circle (2pt);

		\node (e) [above right of=f] {};
		\path [name path=pa] (f) -- (e);

		\path [name intersections={of=pa and fC, by=E}];
		\node (EL) [left=0.2cm of E] {};
		\node (ER) [right=0.2cm of E] {};

		\path [blackline] (EL) edge [] (ER);
		\path [blackline] (f) edge [bend right=30] (E);
		\end{tikzpicture}
	\caption{A query for \emph{regular languages} represented as an NFA (left) and as a LGS guest (on the right). The accepted language is $(ab)^+$.}\label{fig:re_ls}
\end{figure}

\begin{example}
 \Cref{fig:re_ls} shows a NFA and a guest identifying the same language.
 These two objects have the same structure (states/nodes and transition/edges).
\end{example}

\section{Subgraph isomorphism with regular path expressions}
\label{sec:REGraphs}

Many approaches found in literature define hybrid notions of similarities,  ``merging'' classical ones such as GS, SGI and RLPM \cite{DBLP:journals/jacm/BarceloLR14,DBLP:journals/fcsc/FanLMTW12}.
These and similar merges are naturally handled by the modular definition of LGS guests. As an example, we discuss \emph{subgraph isomorphism with regular languages} (RL-SGI) \cite{DBLP:journals/jacm/BarceloLR14}.

\begin{definition}\label{def:REgraphs}
	Let $\Sigma$ be a finite alphabet.
	A \emph{graph decorated with regular languages (over $\Sigma$)}  is a tuple $(\Sigma,V,E,\Lang)$ consisting of a set $V$ of nodes, a set $E \subseteq V \times V$ of edges and a labelling function $\Lang : E \to RE_\Sigma$ decorating each edge with a non empty $\epsilon$-free regular language over $\Sigma$.
\end{definition}

\begin{definition}[RL-SGI]\label{def:RE_SGISO}
	Let $H{=}(\Sigma,V_H, E_H)$ be a host and $Q{=}(\Sigma,V_Q, E_Q, \Lang)$ a graph decorated with regular languages.
	We say that there is a \emph{regular-language subgraph isomorphism of $Q$ into $H$}
	 iff there is a pair of injections $\phi : V_Q \injarrow V_H$ and $\eta : E_Q \injarrow \Path_H$ s.t.~for each $e \in E_Q$
	$\phi \circ s(e) = s \circ \eta(e)$,
	$\phi \circ t(e) = t \circ \eta(e)$, and
	$\sigma \circ \eta(e) \in \Lang(e)$.
	Vertexes of paths in $\eta(E_Q)$ cannot appear in $\phi(V_Q)$ except for their source and target, \ie: $
	\forall (e_0,\dots,e_n) \in \eta(E_Q)\ \forall i \in \{1,\dots,n\}\ s(e_i) \not\in \phi(V_Q)$.
\end{definition}
RL-SGI can be seen as a hybrid notion between subgraph isomorphism and RLPM. We will now show how to solve this problem with loose graph simulations by defining a proper translation from its queries to guests.

\begin{proposition}\label{prop:RE_SGISO_via_LS} Let $Q = (\Sigma,V_Q, E_Q, \Lang)$ be a query for RL-SGI. Let
	\begin{equation*}\textstyle
		G = \bigoplus_{v \in V_Q} v_{\{\MustSym\UniqueSym\InjectiveSym\}} \oplus \bigotimes_{e \in E_Q}G_e[q_e/s(e)][f_e/t(e)]
	\end{equation*}
such that $G_e$ is the translation of the automaton $N_e = (\Sigma,V_e,\delta_e,q_e,\{f_e\})$ for $\Lang(e)$, as per \cref{prop:RE_via_LS} and where $q_e$ and $f_e$ are merged if $s(e) = t(e)$.
For each host ${H = (V_H, E_H)}$ there exists a RL-SGI of $Q$ into $H$ iff ${\SSD{G}{H} \neq \emptyset}$.
\end{proposition}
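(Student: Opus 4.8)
The plan is to fuse the two reductions already at hand: the encoding of the injective part of a subgraph isomorphism through the must/unique/exclusive marks (\cref{prop:SubIso_via_LS}) and the encoding of a regular constraint on a single edge through an automaton gadget (\cref{prop:RE_via_LS}). In $G$ every query vertex $v \in V_Q$ is marked $\MustSym\UniqueSym\InjectiveSym$ and is never an internal automaton state, so $\Must = \Unique = \Injective = V_Q$; each edge $e \in E_Q$ contributes a copy $G_e$ of the gadget for an NFA $N_e$ recognising $\Lang(e)$ whose initial and final states have been glued to $s(e)$ and $t(e)$ (merged when $s(e)=t(e)$). Consequently every vertex of a candidate $V^{G\to H}$ is either a pair $(v,\cdot)$ with $v\in V_Q$ or a pair $(r,\cdot)$ with $r$ internal to exactly one gadget $G_e$. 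I will use freely the normalisation adopted for \cref{prop:RE_via_LS}: $q_e=s(e)$ has no incoming and $f_e=t(e)$ no outgoing transition in $N_e$; I also assume each $N_e$ trim.

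For the direction from RL-SGI to an LGS, I start from injections $\phi,\eta$ as in \cref{def:RE_SGISO} and build $S$ by declaring $(v,\phi(v))\in V^{G\to H}$ for each $v\in V_Q$ and, for each $e$, by choosing an accepting run of $N_e$ over the word $\sigma\circ\eta(e)\in\Lang(e)$ and laying it along the path $\eta(e)$: its $i$-th state is paired with the $i$-th vertex of $\eta(e)$, and its transitions become edges of $E^{G\to H}$. Then \cref{ls:c1} holds because every $v\in V_Q$ appears; \cref{ls:c2} because $\phi$ is a function and internal states are not query vertices; \cref{ls:c3} because $\phi$ is injective and, by the last clause of \cref{def:RE_SGISO}, no internal vertex of any $\eta(e)$ lies in $\phi(V_Q)$; \cref{ls:c4} because at a query vertex I select the product choice whose $e$-component is the first transition of the chosen run, so that exactly the realised transitions occur, while at an internal state the realised edges are the singleton transitions taken by its run; and \cref{ls:c5} because every run is laid down in full (note \cref{prop:ls5_elim} does not apply, since the choice function is not $\lambda x.\{\out(x)\}$), so reachability of a must node $m\in V_Q$ from $(u,u')$ in $G$ is matched by following the relevant runs up to $(m,\phi(m))$ in $S$.

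For the converse I take $S\in\SSD{G}{H}$ and first recover $\phi$: \cref{ls:c1} and \cref{ls:c2} make $\phi(v)$ the unique host image of $v\in V_Q=\Must=\Unique$, and \cref{ls:c3} (using $V_Q=\Injective$) makes $\phi$ injective, exactly as in \cref{prop:SubIso_via_LS}. To recover $\eta(e)$ I extract a run of $N_e$ realised inside $S$. By \cref{ls:c5}, $f_e=t(e)\in\Must$ is reachable in $G$ from $s(e)$ (as $\Lang(e)\neq\emptyset$), so some path $P$ in $S$ joins $(s(e),\phi(s(e)))$ to $(t(e),\phi(t(e)))$, the target component being $\phi(t(e))$ by \cref{ls:c2}. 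I keep the portion $P'$ of $P$ from the last occurrence of $(s(e),\phi(s(e)))$ to the first subsequent occurrence of $(t(e),\phi(t(e)))$, and set $\eta(e)$ to be its host projection. Because $q_e$ has no incoming transitions and internal states of $G_e$ are reached only from within $G_e$, the gadget $G_e$ can be entered only through $q_e=s(e)$, which $P'$ never revisits; hence $P'$ enters $G_e$ on its first edge and stays there until $f_e$, so $P'$ is a genuine accepting $N_e$-run, giving $\sigma\circ\eta(e)\in\Lang(e)$, $\phi\circ s(e)=s\circ\eta(e)$ and $\phi\circ t(e)=t\circ\eta(e)$. The internal vertices of $P'$ are pairs $(r,w)$ with $r$ internal to $G_e$, so \cref{ls:c3} forces $w\notin\phi(V_Q)$, yielding the internal-vertex clause of \cref{def:RE_SGISO}; and $\eta$ is injective because $E_Q\subseteq V_Q\times V_Q$ and $\phi$ is injective, so $\eta(e)=\eta(e')$ forces $s(e)=s(e')$ and $t(e)=t(e')$, whence $e=e'$.

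The delicate point, on which I would spend most care, is precisely this clean extraction of $\eta(e)$ in the converse: showing that a realised path to $t(e)$ can be trimmed to one traversing only the gadget $G_e$ and meeting query vertices solely at its two endpoints. This is what produces an internal-vertex–free path and hence the genuine RL-SGI condition; it rests squarely on the normalisation of $N_e$ (no edge into $q_e$, none out of $f_e$) together with \cref{ls:c3} and \cref{ls:c5}, and it is where the interaction between the automaton gadgets and the exclusivity marks must be argued directly, rather than merely invoked through \cref{prop:RE_via_LS} and \cref{prop:SubIso_via_LS} separately.
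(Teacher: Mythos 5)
Your forward direction is sound and in fact more explicit than the paper's terse sketch (including the correct observation that \cref{prop:ls5_elim} is unavailable here because $\Choice$ is not $\lambda x.\{\out(x)\}$), and your recovery of the injection $\phi$ in the converse is fine. The gap is exactly at the point you yourself flag as delicate, and your argument there does not close it. From \cref{ls:c5} you obtain a path $P$ in $S$ from $(s(e),\phi(s(e)))$ to $(t(e),\phi(t(e)))$ and trim it to $P'$; but nothing forces $P'$ to enter the gadget $G_e$ at all. Your normalisation argument only shows that \emph{if} $P'$ enters $G_e$ it must do so through its first edge and then stay inside until $f_e$; it does not show that it enters. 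If $Q$ contains, say, edges $e=(u,w)$, $e_1=(u,v)$ and $e_2=(v,w)$, the path supplied by \cref{ls:c5} may run through $G_{e_1}$ and then $G_{e_2}$ instead, in which case the label of your extracted $\eta(e)$ lies in $\Lang(e_1)\cdot\Lang(e_2)$ rather than in $\Lang(e)$, and its interior visits $\phi(v)\in\phi(V_Q)$; so both the language clause and the internal-vertex clause of \cref{def:RE_SGISO} can fail for the $\eta$ you construct.

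The missing ingredient is the one the paper isolates as its property \emph{(ii)}: because the gadgets are combined with $\otimes$, every $\gamma\in\Choice(s(e))$ contains \emph{exactly one} first transition of each gadget $G_{e'}$ with $s(e')=s(e)$, in particular exactly one of $G_e$. Hence \cref{ls:c4} at $(s(e),\phi(s(e)))$ forces some realised edge of $S$ that is a first transition of $N_e$, landing on a pair $(r,x)$ with $r$ either $f_e$ or internal to $G_e$. Only now does your confinement argument apply: invoking \cref{ls:c5} at $(r,x)$ (using that $N_e$ is trim, so $t(e)\in\Must$ is reachable from $r$ in $G$) yields a path that cannot leave $G_e$ before first reaching a pair of the form $(t(e),\cdot)$, which by \cref{ls:c2} is $(t(e),\phi(t(e)))$; prepending the forced first edge gives the accepting run you need. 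With that substitution the remainder of your converse (exclusion of internal vertices from $\phi(V_Q)$ via \cref{ls:c3}, injectivity of $\eta$) goes through.
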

\begin{proof}
	It follows from definition of $G$ that:
	\begin{enumerate*}[label={\emph{(\roman*)}},]
		\item $V_Q$ is a subset of the vertices of $V_G$ and $\Must = \Unique = \Injective = V_Q$;
		\item for any $v \in V_Q$, any $\gamma \in \Choice(v)$, and any $e \in \out(v)$ of $Q$, there is exactly one edge in $\gamma$ that is induced by a transition in $N_e$
	\end{enumerate*}
	Similarly to the proof of \cref{prop:SubIso_via_LS}, \cref{ls:c1,ls:c2,ls:c3} together with the first property ensure that each LGS over $G$ corresponds to an injection w.r.t~$V_Q$.
	It follows from the second property, \cref{prop:RE_via_LS}, \cref{ls:c4,ls:c5} that  every LGS over $G$ contains, for each $e \in E_Q$ a path whose labels, starting and ending nodes lie in $\Lang(e)$ and $V_Q \times V_H$, whereas all other vertices are in $(V_G \setminus V_Q) \times V_H$. Then, $\SSD{G}{H} \neq \emptyset$ iff there are RL-SGIs of $Q$ into $H$.
\end{proof}

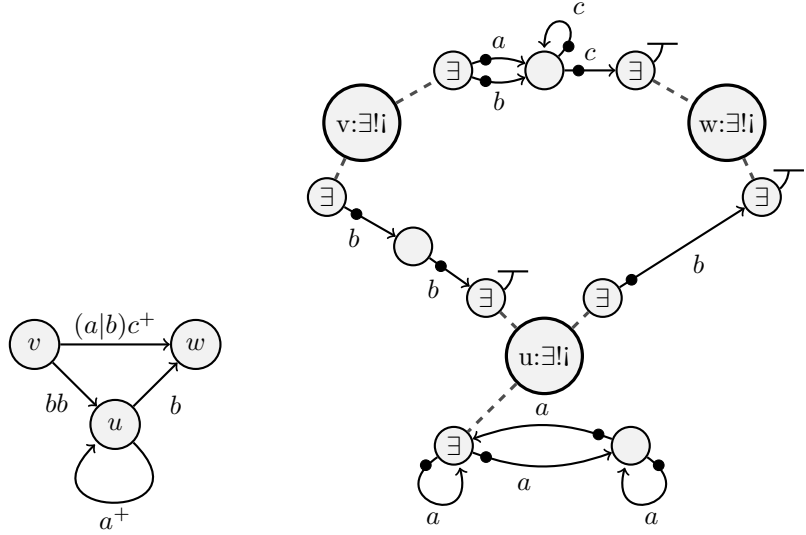
\begin{figure}[!t]
		\centering
			\begin{tikzpicture}[node distance=1.5cm]
			\node [ssplace] (a) [] {$u$};
			\node [ssplace] (b) [above left of=a] {$v$};
			\node [ssplace] (c) [above right of=a] {$w$};

			\path [blackarrow]
			(b) edge [->] node[below left=-3pt] {$bb$} (a)
			(b) edge [->] node[above=-2pt] {$(a|b)c^+$} (c)
			(a) edge [->] node[below right=-2pt] {$b$} (c);

			\draw [blackarrow] (a) edge [in=225, out=-45, loop] node[below=-3pt](uau) {$a^+$} (a);
			\end{tikzpicture}
$\qquad$
			\begin{tikzpicture}[node distance=1.2cm]

			\node [supplace] (b) [] {v:\MustSym\UniqueSym\InjectiveSym};

			\node [minssplace, position=30:{6mm} from b] (q1) {$\exists$};
			\node [minssplace] (i1) [right of=q1] {};
			\node [minssplace] (f1) [right of=i1] {$\exists$};

			\node [supplace, position=-30:{6mm} from f1] (c) {w:\MustSym\UniqueSym\InjectiveSym};

			\node [minssplace, position=-115:{3mm} from b] (q2) {$\exists$};
			\node [minssplace, position=-30:{7.8mm} from q2] (i2) {};

			\node [supplace] (a) [below=3cm of i1] {u:\MustSym\UniqueSym\InjectiveSym};

			\node [minssplace, position=135:{3mm} from a] (f2) {$\exists$};

			\node [minssplace, position=-65:{3mm} from c] (f3) {$\exists$};
			\node [minssplace, position=45:{3mm} from a] (q3) {$\exists$};

			\node [minssplace, position=-135:{9mm} from a] (q4) {$\exists$};
			\node [minssplace, position=0:{18mm} from q4] (f4) {};

			\path [blackarrow]
			(q1) edge [->,bend left=20, name path=q1i11] node[above=-3pt] {$a$} (i1)
			(q1) edge [->,bend right=20, name path=q1i12] node[below=-2pt] {$b$} (i1)
			(i1) edge [->, name path=i1f1] node[above=-3pt] {$c$} (f1);

			\draw [blackarrow] (i1) edge [in=90, out=45, loop] node[above right=-5pt](uau) {$c$} (i1);
			\node [position=45:{10mm} from i1] (i1h) {};
			\path[name path=i1i1] (i1) -- (i1h) {};

			\path [blackarrow]
			(q2) edge [->, name path=q2i2] node[below left=-4pt] {$b$} (i2)
			(i2) edge [->, name path=i2f2] node[below left=-4pt] {$b$} (f2);

			\path [blackarrow]
			(q3) edge [->, name path=q3f3] node[below right=-4pt] {$b$} (f3);

			\path [blackarrow]
			(q4) edge [->,bend right=20, name path=q4f4] node[below left=-3pt] {$a$} (f4)
			(f4) edge [->, bend right=20, name path=f4q4] node[above=-3pt] {$a$} (q4);

			\draw [blackarrow] (q4) edge [in=-75, out=215, loop] node[below=-3pt](uau) {$a$} (q4);
			\node [position=215:{10mm} from q4] (q4h) {};
			\path[name path=q4q4] (q4) -- (q4h) {};

			\draw [blackarrow] (f4) edge [in=-105, out=-35, loop] node[below=-3pt](uau) {$a$} (f4);
			\node [position=-35:{10mm} from f4] (f4h) {};
			\path[name path=f4f4] (f4) -- (f4h) {};

			\draw[simularrow]
			(b) -- (q1)
			(b) -- (q2)
			(f2) -- (a)
			(f1) -- (c)
			(a) -- (q3)
			(a) -- (q4)
			(c) -- (f3);

			\path [name path=q1C] (q1) circle (0.45cm);
			\path [name intersections={of=q1C and q1i11,by=int1}];
			\draw [blackdot] (int1) circle (2pt);

			\path [name intersections={of=q1C and q1i12,by=int2}];
			\draw [blackdot] (int2) circle (2pt);

			\path [name path=i1C] (i1) circle (0.45cm);
			\path [name intersections={of=i1C and i1f1,by=int3}];
			\draw [blackdot] (int3) circle (2pt);

			\path [name intersections={of=i1C and i1i1,by=int4}];
			\draw [blackdot] (int4) circle (2pt);

			\path [name path=q2C] (q2) circle (0.45cm);
			\path [name intersections={of=q2C and q2i2,by=int5}];
			\draw [blackdot] (int5) circle (2pt);

			\path [name path=i2C] (i2) circle (0.45cm);
			\path [name intersections={of=i2C and i2f2,by=int6}];
			\draw [blackdot] (int6) circle (2pt);

			\path [name path=q3C] (q3) circle (0.45cm);
			\path [name intersections={of=q3C and q3f3,by=int7}];
			\draw [blackdot] (int7) circle (2pt);

			\path [name path=q4C] (q4) circle (0.45cm);
			\path [name intersections={of=q4C and q4f4,by=int8}];
			\draw [blackdot] (int8) circle (2pt);

			\path [name intersections={of=q4C and q4q4,by=int9}];
			\draw [blackdot] (int9) circle (2pt);

			\path [name path=f4C] (f4) circle (0.45cm);
			\path [name intersections={of=f4C and f4q4,by=int10}];
			\draw [blackdot] (int10) circle (2pt);

			\path [name intersections={of=f4C and f4f4,by=int11}];
			\draw [blackdot] (int11) circle (2pt);

			\path [name path=bC] (b) circle (0.8cm);
			\path [name path=cC] (c) circle (0.8cm);
			\path [name path=aC] (a) circle (0.8cm);

			\path [name path=f1C] (f1) circle (0.5cm);
			\path [name path=f2C] (f2) circle (0.5cm);
			\path [name path=f3C] (f3) circle (0.5cm);
			\path [name path=f4C] (f4) circle (0.5cm);

			\node (tf1) [above  right of=f1] {};
			\path [name path=rtf1] (tf1) -- (f1);

			\path [name intersections={of=rtf1 and f1C, by=E4}];
			\node (EL4) [left=0.2cm of E4] {};
			\node (ER4) [right=0.2cm of E4] {};

			\path [blackline] (EL4) edge [] (ER4);
			\path [blackline] (f1) edge [bend right=20] (E4);

			\node (tf2) [above  right of=f2] {};
			\path [name path=rtf2] (tf2) -- (f2);

			\path [name intersections={of=rtf2 and f2C, by=E5}];
			\node (EL5) [left=0.2cm of E5] {};
			\node (ER5) [right=0.2cm of E5] {};

			\path [blackline] (EL5) edge [] (ER5);
			\path [blackline] (f2) edge [bend right=20] (E5);

			\node (tf3) [above  right of=f3] {};
			\path [name path=rtf3] (tf3) -- (f3);

			\path [name intersections={of=rtf3 and f3C, by=E6}];
			\node (EL6) [left=0.2cm of E6] {};
			\node (ER6) [right=0.2cm of E6] {};

			\path [blackline] (EL6) edge [] (ER6);
			\path [blackline] (f3) edge [bend right=20] (E6);

			\end{tikzpicture}
			\caption{A RE-SGISO query (left) and simple guests required to encode it (right). Vertices with the same name are highlighted by dashed edges between them.}
			\label{fig:RE_SGISO_1}
\end{figure}

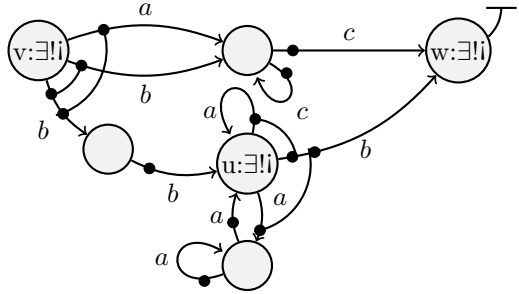
\begin{figure}[!t]
			\centering
			\begin{tikzpicture}[node distance=1.5cm]
			\node [ssplace] (b) {v:\MustSym\UniqueSym\InjectiveSym};
			\node [ssplace] (i1) [right=2cm of b] {};
			\node [ssplace] (c) [right=2cm of i1] {w:\MustSym\UniqueSym\InjectiveSym};

			\node [ssplace] (a) [below of=i1] {u:\MustSym\UniqueSym\InjectiveSym};
			\node [ssplace] (i4) [below=0.6cm of a] {};

			\node [ssplace, position=-55:{8.5mm} from b] (i2) {};

			\path [blackarrow]
			(b) edge [->,bend left=20, name path=bi11] node[above=-3pt] {$a$} (i1)
			(b) edge [->,bend right=20, name path=bi12] node[below=-2pt] {$b$} (i1)
			(i1) edge [->, name path=i1c] node[above=-3pt] {$c$} (c);

			\draw [blackarrow] (i1) edge [in=-70, out=-30, loop] node[below right=-5pt](uau) {$c$} (i1);
			\node [position=-30:{10mm} from i1] (i1h) {};
			\path[name path=i1i1] (i1) -- (i1h) {};

			\path [blackarrow]
			(b) edge [->,bend right=20, name path=bi2] node[below left=-3pt] {$b$} (i2)
			(i2) edge [->,bend right=20, name path=i2a] node[below=-2pt] {$b$} (a);

			\path [blackarrow]
			(a) edge [->,bend right=20, name path=ac] node[below=-3pt] {$b$} (c);

			\path [blackarrow]
			(a) edge [->,bend left=20, name path=ai4] node[above right=-3pt] {$a$} (i4)
			(i4) edge [->,bend left=20, name path=i4a] node[left=-3pt] {$a$} (a);

			\draw [blackarrow] (a) edge [in=120, out=80, loop] node[below left](uau) {$a$} (a);
			\node [position=80:{10mm} from a] (ah) {};
			\path[name path=aa] (a) -- (ah) {};

			\draw [blackarrow] (i4) edge [in=150, out=200, loop] node[left=-3pt](uau) {$a$} (i4);
			\node [position=200:{10mm} from i4] (i4h) {};
			\path[name path=i4i4] (i4) -- (i4h) {};

			\path [name path=bC1] (b) circle (0.9cm);
			\path [name path=bC2] (b) circle (0.6cm);
			\path [name path=cC] (c) circle (0.8cm);
			\path [name path=aC1] (a) circle (0.9cm);
			\path [name path=aC2] (a) circle (0.6cm);

			\node (tc) [above  right of=c] {};
			\path [name path=rtc] (tc) -- (c);

			\path [name intersections={of=rtc and cC, by=E}];
			\node (EL) [left=0.2cm of E] {};
			\node (ER) [right=0.2cm of E] {};

			\path [blackline] (EL) edge [] (ER);
			\path [blackline] (c) edge [bend right=20] (E);

			\path [name intersections={of=bC1 and bi11,by=int}];
			\draw [blackdot] (int) circle (2pt);

			\path [name intersections={of=bC2 and bi12,by=int2}];
			\draw [blackdot] (int2) circle (2pt);

			\path [name intersections={of=bC2 and bi2,by=int3}];
			\draw [blackdot] (int3) circle (2pt);

			\path [name intersections={of=bC1 and bi2,by=int4}];
			\draw [blackdot] (int4) circle (2pt);

			\draw[blackline] ([shift=(int)]0,0) arc (17:-75:0.9cm);
			\draw[blackline] ([shift=(int2)]0,0) arc (-17:-75:0.6cm);

			\path [name intersections={of=aC2 and aa,by=int5}];
			\draw [blackdot] (int5) circle (2pt);

			\path [name intersections={of=aC1 and ai4,by=int6}];
			\draw [blackdot] (int6) circle (2pt);

			\path [name intersections={of=aC1 and ac,by=int7}];
			\draw [blackdot] (int7) circle (2pt);

			\path [name intersections={of=aC2 and ac,by=int8}];
			\draw [blackdot] (int8) circle (2pt);

			\draw[blackline] ([shift=(int5)]0,0) arc (85:10:0.6cm);
			\draw[blackline] ([shift=(int6)]0,0) arc (-75:14:0.9cm);

			\path [name path=i1C] (i1) circle (0.6cm);
			\path [name intersections={of=i1C and i1c,by=int9}];
			\draw [blackdot] (int9) circle (2pt);
			\path [name intersections={of=i1C and i1i1,by=int10}];
			\draw [blackdot] (int10) circle (2pt);

			\path [name path=i2C] (i2) circle (0.6cm);
			\path [name intersections={of=i2C and i2a,by=int11}];
			\draw [blackdot] (int11) circle (2pt);

			\path [name path=i4C] (i4) circle (0.6cm);
			\path [name intersections={of=i4C and i4a,by=int12}];
			\draw [blackdot] (int12) circle (2pt);

			\path [name intersections={of=i4C and i4i4,by=int13}];
			\draw [blackdot] (int13) circle (2pt);
			\end{tikzpicture}
			\caption{A guest obtained via \emph{multiplication} and \emph{addition} operator from the guest in \cref{fig:RE_SGISO_1} on the right and equivalent to the RE-SGISO query in \cref{fig:RE_SGISO_1} on the left.}
		\label{fig:RE_SGISO_2}
\end{figure}

\begin{example}
	\cref{fig:RE_SGISO_1,fig:RE_SGISO_2} show a query for RL-SGI and its translation as a LGS guest.
	As illustrated by \cref{prop:RE_SGISO_via_LS,fig:RE_SGISO_2}, translations are obtained modularly: following \cref{sec:SubIso_via_LS,sec:RE_via_LS}, the first step is to represent nodes and edges of a RL-SGI query in the guests for the SGI and RLPM queries, respectively; the second is to compose them via the guest algebra.
\end{example}

\section{A polynomial fragment of LGSs}\label{sec:POLY}

RLPM and GS are two well-known problems for graph pattern matching and they both admit polynomial time algorithms. Since the emptiness problem for LGSs is NP-complete, we are interested in studying fragments of LGSs that are solvable in polynomial time yet expressive enough to capture the RLPM and GS problems.
The class of simulation problems for guests whose unique and exclusive sets are empty enjoys this property.

Fix $G = (\Sigma_G, V_G, E_G, \Must, \Unique, \Injective, \Choice)$ and $H = (\Sigma_H, V_H,E_H)$. If $\Unique$ and $\Injective$ are empty then, LGSs for $G$ and $H$ are closes under unions hence the union $\bigcup \SSD{G}{H}$ of all LGSs correspond to the greatest LGS. Observe that greatest LGSs may not exist in the general case.
\begin{proposition}\label{prop:max_sim}
	Let $G$ be a guest such that $\Unique = \Injective = \emptyset$. Then $\bigcup \SSD{G}{H}$ is a LGS.
\end{proposition}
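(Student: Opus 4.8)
The plan is to verify directly that the union $S$ of all LGSs satisfies every clause of \cref{def:loose_simulation}. Write $\mathcal{F} = \SSD{G}{H}$ for the family of all LGSs and, for $S' \in \mathcal{F}$, let $V_{S'}$ and $E_{S'}$ denote its vertex and edge sets, so that $S = (\Sigma_G \cap \Sigma_H,\ \bigcup_{S' \in \mathcal{F}} V_{S'},\ \bigcup_{S' \in \mathcal{F}} E_{S'})$. First I would check that $S$ is a well-formed subgraph of $G \times H$: its vertices and edges are unions of subsets of $V_G \times V_H$ and of $E^\times$, and every edge of the union lies in some $E_{S'}$, whose endpoints already belong to $V_{S'} \subseteq \bigcup V_{S'}$. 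The conceptual heart of the argument is that, once $\Unique = \Injective = \emptyset$, the conditions \cref{ls:c2,ls:c3} become vacuous, and these are exactly the uniqueness- and injectivity-type constraints that are \emph{not} stable under enlarging a simulation. The three surviving conditions \cref{ls:c1,ls:c4,ls:c5} are all of an existential, monotone character, so each can be inherited from a single contributing LGS. I would treat them one at a time.

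The two routine cases are \cref{ls:c1,ls:c5}. For \cref{ls:c1}, provided $\mathcal{F} \neq \emptyset$, pick any $S' \in \mathcal{F}$: for $u \in \Must$ it already supplies some $u'$ with $(u,u') \in V_{S'} \subseteq \bigcup V_{S'}$, which witnesses the condition for $S$. For \cref{ls:c5}, given $(u,u') \in V^{G\to H}$ and $v \in \Must$ with $\Path_G(u,v) \neq \emptyset$, I would locate an $S' \in \mathcal{F}$ with $(u,u') \in V_{S'}$; since $S'$ is itself an LGS it contains a path from $(u,u')$ to some $(v,v')$, and because $E_{S'} \subseteq E^{G\to H}$ this path survives verbatim in $S$. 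The only point to watch is that such a reachability witness uses finitely many edges, all drawn from one $E_{S'}$, so monotonicity of the edge set suffices.

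The main step, and the place I expect the real (if mild) work to be, is \cref{ls:c4}, because it has both a forward and a backward half. For the forward half, a vertex $(u,u') \in V^{G\to H}$ originates in some $S' \in \mathcal{F}$, which provides a choice set $\gamma \in \Choice(u)$ all of whose edges are realised inside $E_{S'}$; since forming $S$ only \emph{adds} edges, the same $\gamma$ still has all its edges realised in $E^{G\to H}$. For the backward half, any edge $((u,u'),a,(v,v')) \in E^{G\to H}$ lies in some $E_{S'}$, and the choice set $\gamma \ni (u,a,v)$ that justified this edge inside $S'$ continues to justify it in $S$, again because every edge required by $\gamma$ already sits in $E_{S'} \subseteq E^{G\to H}$. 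The unifying observation is that each clause of \cref{ls:c4} asserts the \emph{existence} of a witnessing choice set together with witnessing edges, and such witnesses persist when the edge set grows. This is precisely where the hypothesis matters: the absence of \cref{ls:c2,ls:c3} removes the only constraints that would instead be violated as more pairs accumulate.

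Finally I would dispose of the degenerate case and record the conclusion. If $\mathcal{F} = \emptyset$ then $S$ is the empty graph, which trivially satisfies \cref{ls:c2,ls:c3,ls:c4,ls:c5}; hence $\mathcal{F} = \emptyset$ forces $\Must \neq \emptyset$, and then $S$ fails \cref{ls:c1}. So the statement is to be read under the harmless proviso that at least one LGS exists, i.e.\ the precise content is that $\SSD{G}{H}$ is closed under arbitrary non-empty unions. Granting this, $S \in \SSD{G}{H}$; and since $S$ contains every member of $\mathcal{F}$ by construction, it is the greatest LGS, which is exactly what the surrounding discussion of $\bigcup \SSD{G}{H}$ requires.
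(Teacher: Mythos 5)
Your proof is correct. The paper actually states \cref{prop:max_sim} without a proof, offering only the one-line remark that when $\Unique = \Injective = \emptyset$ the LGSs are closed under unions; your argument is precisely the fleshed-out version of that closure claim, handling each condition by the same monotonicity observation (the surviving conditions \cref{ls:c1,ls:c4,ls:c5} are existential in their witnesses, while \cref{ls:c2,ls:c3} --- the only anti-monotone ones --- are vacuous by hypothesis). Your remark on the degenerate case $\SSD{G}{H} = \emptyset$, where the union is the empty graph and fails \cref{ls:c1} whenever $\Must \neq \emptyset$, is a legitimate sharpening that the paper glosses over: the proposition is really about closure under non-empty unions, which is all that the subsequent algorithm and theorem require.
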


\Cref{algo:max_lgs} shows an algorithm for computing the greatest LGS provided that $\Unique$ and $\Injective$ are empty. The algorithm runs in polynomial time and can be readily adapted to compute the greatest LGSs included in a given subgraph of $G \times H$. It follows that the emptiness problem admits a polynomial procedure.

\begin{figure}[t!]
\begin{algorithm}[H]
	\SetKwRepeat{Do}{do}{while}
	\KwData{A host $H$ and a guest $G$ s.t.~$\Unique = \Injective = \emptyset$}
	\KwResult{$\bigcup \SSD{G}{H}$ if it exists, otherwise \emph{false}.}
	$(\Sigma,V_S,E_S) \gets G \times H$\;
	\Do{$V_S \neq V_{S^\prime}$ or $E_S \neq E_{S^\prime}$}{
		$(\Sigma,V_{S^\prime},E_{S^\prime}) \gets (\Sigma,V_S,E_S)$\;
		\ForEach{$(u,v) \in V_{S^\prime}$}{
			\ForEach{$((u,v),a,(u^\prime,v^\prime)) \in \out((u,v))$}{
				\If{$\nexists \gamma \in \Choice(u)$ s.t.~$(u,a,u^\prime) \in \gamma$ and $\forall (u,b,u^{\prime\prime}) \in \gamma$ $\exists (v,b,v^{\prime\prime}) \in \out(v)$
				$((u,v),b,(u^{\prime\prime},v^{\prime\prime})) \in \out((u,v))$}{
					$E_{S^\prime} \gets {E_{S^\prime} \setminus \{((u,v),a,(u^\prime,v^\prime))\}}$\;
				}
			}
			\If{$(\out((u,v)) = \emptyset$ and $\emptyset \not\in \Choice(u))$ or
			$(\exists m \in \Must$ s.t.~$\Path_G(u,m) \neq \emptyset$ and $\forall v^\prime \in V_H\ \Path_{(\Sigma,V_{S^\prime},E_{S^\prime})}((u,v),(m,v^\prime)) = \emptyset)$}{
				$E_{S^\prime} \gets {E_{S^\prime} \setminus (\out((u,v)) \cup \mathrm{in}((u,v)))}$\;
				$V_{S^\prime} \gets {V_{S^\prime} \setminus \{(u,v)\}}$\;
			}
		}
	}
	\lIf{$\forall m \in \Must\ \exists v \in V_H$ s.t.~$(m,v) \in V_S$}{\Return{$(\Sigma,V_S,E_S)$}
	}\lElse{\Return{\emph{false}}}
\end{algorithm}
\caption{Algorithm for computing the greatest loose graph simulation.}\label{algo:max_lgs}
\end{figure}

\begin{theorem}
	Let $H$ be a host and $G$ be a guest such that  $\Unique = \Injective = \emptyset$. Then, the maximal LGS exists and is computed in polynomial time.
\end{theorem}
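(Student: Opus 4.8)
The plan is to split the statement into its two claims — existence of a maximal (greatest) LGS and its polynomial-time computability — and to prove the correctness and complexity of \cref{algo:max_lgs}. Existence is already in hand: by \cref{prop:max_sim}, when $\Unique = \Injective = \emptyset$ the union $\bigcup \SSD{G}{H}$ of all LGSs is itself a LGS, and being a union it contains every LGS, so it is the greatest element of $\SSD{G}{H}$ ordered by subgraph inclusion. It therefore suffices to show that \cref{algo:max_lgs} returns exactly $\bigcup \SSD{G}{H}$ (or reports emptiness) and runs in polynomial time.

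For correctness I would view the loop body as a monotone ``pruning'' operator on subgraphs of $G \times H$: starting from the whole product, each pass deletes edges that fail the local support required by \cref{ls:c4} and deletes vertices that are dead ends (no admissible choice set) or that violate the reachability constraint \cref{ls:c5}. The key lemma is \emph{soundness of pruning}: every vertex or edge deleted in a given pass cannot belong to any LGS contained in the current candidate, hence — by induction on the passes — cannot belong to any LGS at all. Concretely, if $((u,u^\prime),a,(v,v^\prime))$ is removed then no $\gamma \in \Choice(u)$ supports it within the current edges, and since any LGS is a subgraph of the current candidate this edge is unsupported there too; similarly a vertex removed for want of a path to a reachable must-node cannot sit in any LGS by \cref{ls:c5}. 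Consequently $\bigcup \SSD{G}{H}$ is contained in every intermediate graph and in the fixpoint. For the converse, one checks that the stable graph reached at termination satisfies \cref{ls:c2,ls:c3} vacuously (as $\Unique = \Injective = \emptyset$), and satisfies \cref{ls:c4,ls:c5} by the loop's exit conditions, so it is itself a LGS and is thus contained in $\bigcup \SSD{G}{H}$. The two inclusions give equality, and the final scan implements \cref{ls:c1}.

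For the complexity, note that $G \times H$ has $\BO(V_G \times V_H)$ vertices and $\BO(E_G \times E_H)$ edges. Since the operator only deletes, the outer \texttt{do}--\texttt{while} loop performs at most one fruitless pass plus one pass per deletion, i.e.\ $\BO(V_G \times V_H + E_G \times E_H)$ iterations. Each pass evaluates the \cref{ls:c4} test (polynomial, exactly as in the proof of \cref{theo:LS_check_in_P}) for every edge, and recomputes the reachability relations of $G$ and of the current subgraph to evaluate \cref{ls:c5} (e.g.\ by Floyd--Warshall, in $\BO(V_G^3 \times V_H^3)$). Multiplying the per-pass cost by the number of passes keeps the total polynomial in the sizes of $G$ and $H$, and emptiness is then decided by the final linear scan over $\Must$.

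The part I expect to be delicate is the interplay between the local condition \cref{ls:c4} and the global condition \cref{ls:c5}: deleting an edge to repair \cref{ls:c4} may destroy a path needed for \cref{ls:c5}, while deleting a vertex for \cref{ls:c5} may invalidate a choice set elsewhere and trigger further \cref{ls:c4} deletions. The deletion-only design makes the process a descending chain that must stabilise, but the careful point is proving that recomputing reachability on the \emph{updated} graph at each pass yields the greatest simultaneous solution of both conditions rather than an over-pruned one. This is exactly where soundness of pruning — that nothing belonging to some LGS is ever discarded — does the work, and it is the crux I would write out in full.
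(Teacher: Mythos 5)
Your proposal is correct and follows essentially the same route as the paper: compute $G \times H$, prune edges and vertices violating \cref{ls:c4,ls:c5} until a fixpoint, check \cref{ls:c1}, and argue maximality by showing that nothing belonging to any LGS is ever deleted (the paper phrases this as a contradiction with a hypothetical strictly larger LGS, which is the contrapositive of your soundness-of-pruning invariant). Your write-up is in fact slightly more explicit than the paper's about the induction maintaining that every LGS remains a subgraph of the current candidate, which is exactly the point the paper's terse ``it cannot be removed by the loop'' relies on.
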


\begin{proof}
	The algorithm in \cref{algo:max_lgs} starts by computing $G \times H$ and saving it to $(\Sigma,V_S,E_S)$ (Line~1). Afterwards, the \emph{do-while} loop (Lines~2-11) proceeds removing nodes and edges of $(\Sigma,V_S,E_S)$ that do not satisfy \cref{ls:c4,ls:c5}. Lastly (Lines~12-15), \cref{ls:c1} is checked and, if satisfied, $(\Sigma,V_S,E_S)$ is returned, otherwise there is no greatest LGS and the algorithm terminates returning \emph{false}.
	The algorithm runs in polynomial time, since \cref{ls:c1,ls:c4,ls:c5} can be checked in polynomial time (\cref{theo:LS_check_in_P}) and the loop will be performed at most $|V_S| + |E_S|$ times.
	Conditions at Lines~6 and 8 check that edges and nodes satisfy \cref{ls:c4,ls:c5}. If any of these does not hold, the temporary copy of $(\Sigma,V_S,E_S)$, \ie $(\Sigma,V_{S^\prime},E_{S^\prime})$, is updated removing an edge or a vertex. Thus, $V_S \neq V_{S^\prime}$ or $E_S \neq E_{S^\prime}$ iff $(\Sigma, V_S, E_S)$ does not satisfy \cref{ls:c4,ls:c5}.
	After the \emph{do-while} loop, $(\Sigma, V_S, E_S)$ is a (possibly empty) relation that satisfies \cref{ls:c4,ls:c5}. Thus it remains only to check \cref{ls:c1} and this is done at Line~15: if the check fails there is no greatest LGSs otherwise it is the graph $(\Sigma, V_S, E_S)$ returned by the algorithm. Assume otherwise that there is a LGS $(\Sigma, V_M, E_M)$ s.t.~$V_S \subset V_M$ or $E_S \subset E_M$. Then in $(\Sigma, V_M, E_M)$ there is a node or an edge that satisfies \ref{ls:c4} and \ref{ls:c5} and is in $G\times H \setminus (\Sigma, V_S, E_S)$.
	Since it satisfies \ref{ls:c4} and \ref{ls:c5} it cannot be removed by the loop hence it is in $(\Sigma, V_S, E_S)$ --- a contradiction.
\end{proof}

\section{Conclusions and future work}
\label{sec:CONCLUSIONS}

\looseness=-1
In this paper we have introduced \emph{loose graph simulations}, which are relations between graphs that can be used to check structural properties of labelled hosts. LGSs' guests can be represented using a simple graphical notation, but also compositionally by means of an algebra which is sound and complete.
We have shown formally that computing LGSs is an NP-complete problem, where the NP-hardness is obtained via a reduction of subgraph isomorphism to them. Moreover, we have shown that many other classical notions of graph pattern matching are naturally subsumed by LGSs. Therefore, LGSs offer a simple common ground between multiple well-known notions of graph pattern matching supporting a modular approach to these notions as well as to the development of common techniques.

An algorithm for computing LGSs in a decentralised fashion and inspired to the ``distributed amalgamation'' strategy is introduced in \cite{mansutti:msc-thesis}.
Roughly speaking, the host graph is distributed over processes; each process uses its partial view of the host to compute partial solutions to exchange with its peers. Distributed amalgamation guarantees each solution is eventually found by at least one process.

The same strategy is at the core of distributed algorithms for solving problems such as \emph{bigraphical embeddings} and the distributed execution of bigraphical rewriting systems \cite{DBLP:journals/eceasst/MansuttiMP14,mmp:gcm2014,DBLP:journals/corr/MiculanP14b}.
Bigraphs \cite{DBLP:books/daglib/0022395,mp:tr2013,gm:mfps07} have been proved to be quite effective for modelling, designing and prototyping distributed systems, such as \emph{multi-agent systems} \cite{DBLP:conf/dais/MansuttiMP14}.
This similarity and the ability of LGS to subsume several graph problems suggests to investigate graph rewriting systems where redex occurrences are defined in terms of LGSs.

Another topic for further investigation is how to systematically minimise guests or combine sets of guests into single instances, while preserving the semantics of LGSs. Moreover, following what already done in \cref{sec:POLY}, the complexity of various fragments of LGSs still needs to be addressed, \eg defining a fragment that is \emph{fixed-parameter tractable}.
Results in these directions would have a positive practical impact on applications based on LGSs.

\paragraph{Acknowledgements}
We thank Andrea Corradini for his insightful observations on a preliminary version of this work and for proposing the name ``loose graph simulations''.

\providecommand\noopsort[1]{}
\small

\end{document}